\newcommand{\pState}[2]{\ket{\phi^{(#1)}}_{#2}}
\newcommand{\psiState}[2]{\ket{\psi^{(#1)}}_{#2}}
\newcommand{\en}[2]{\epsilon^{(#1)}_{#2}}
\newtheorem*{theorem}{Theorem}
\newtheorem{statement}{Statement}
\crefname{Remark}{remark\!}{Remark\!}
\crefname{equation}{Eq.\!}{Eqs.\!}
\crefname{chapter}{Chap.\!}{Chaps.\!}
\crefname{section}{Sec.\!}{Secs.\!}
\newcommand{\stepOne}{\textbf{Local resonator modes}}
\newcommand{\stepTwo}{\textbf{Edge state control}}
\newcommand{\stepThree}{\textbf{Contrast variation}}
\newcommand{\leftRightBraket}[3]{\tensor*[^{}_{#1}]{\braket{#2}}{^{}_{#3}}}
\newcommand{\leftBraket}[2]{\tensor*[^{}_{#1}]{\bra{#2}}{^{}_{}}}
\newcommand{\matlabFootnote}{To give precise results also for smallest amplitudes of around $10^{-20}$, the corresponding eigenstates have been obtained by diagonalizing $H$ with 64 decimal digits of precision by using the variable precision tools from the MATLAB\textsuperscript{\textregistered} Symbolic Math Toolbox\textsuperscript{TM}}
\begin{document}

\title{Local symmetry theory of resonator structures for the real-space control of edge states in binary aperiodic chains}

\author{M. Röntgen}
 \email[]{mroentge@physnet.uni-hamburg.de}
\affiliation{%
	Zentrum für optische Quantentechnologien, Universität Hamburg, Luruper Chaussee 149, 22761 Hamburg, Germany
}%

\author{C. V. Morfonios}%
\affiliation{%
	Zentrum für optische Quantentechnologien, Universität Hamburg, Luruper Chaussee 149, 22761 Hamburg, Germany
}%
\author{R. Wang}%
\affiliation{%
	Department of Electrical and Computer Engineering and Photonics Center, Boston University, Boston, MA 02215, USA	
}%
\author{L. Dal Negro}%
\affiliation{%
	Department of Electrical and Computer Engineering and Photonics Center, Boston University, Boston, MA 02215, USA	
}%
\affiliation{%
	Department of Physics, Boston University, Boston, MA 02215, USA
}%
\affiliation{%
Department of Material Science and Engineering, Boston University, Boston, MA 02215, USA
}%

\author{P. Schmelcher}
\affiliation{%
	Zentrum für optische Quantentechnologien, Universität Hamburg, Luruper Chaussee 149, 22761 Hamburg, Germany
}%
\affiliation{%
	The Hamburg Centre for Ultrafast Imaging, Universität Hamburg, Luruper Chaussee 149, 22761 Hamburg, Germany
}%

\begin{abstract}
	We propose a real-space approach explaining and controlling the occurrence of edge-localized gap states between the spectral quasibands of binary tight binding chains with deterministic aperiodic long-range order.
	The framework is applied to the Fibonacci, Thue-Morse and Rudin-Shapiro chains, representing different structural classes.
	Our approach is based on an analysis of the eigenstates at weak inter-site coupling, where they are shown to generically localize on locally reflection-symmetric substructures which we call local resonators.
	A perturbation theoretical treatment demonstrates the local symmetries of the eigenstates.
	Depending on the degree of spatial complexity of the chain, the proposed local resonator picture can be used to predict the occurrence of gap-edge states even for stronger couplings.
	Moreover, we connect the localization behavior of a given eigenstate to its energy, thus providing a quantitative connection between the real-space structure of the chain and its eigenvalue spectrum.
	This allows for a deeper understanding, based on local symmetries, of how the energy spectra of binary chains are formed.
	The insights gained allow for a systematic analysis of aperiodic binary chains and offers a pathway to control structurally induced edge states.
\end{abstract}

\maketitle

\section{Introduction} \label{sec:Introduction}

Aperiodic systems with deterministic long-range order have long been a subject of intense study, in the endeavor to systematically bridge the gap between crystalline periodicity and complete disorder \cite{Maciaroleaperiodicorder2006}.
While providing a powerful concept in theoretically modeling the transition to disorder, aperiodic order has become an established property of matter as well. 
A cornerstone of this was the actual observation of ``quasicrystals''---non-periodic but space-filling structures surpassing the crystallographic restriction theorem---by Shechtman \cite{ShechtmanMetallicPhaseLongRange1984}.
In nature quasiperiodicity occurs e.g. in macroscopic constellations such as phyllotaxis \cite{PennybackerPhyllotaxisPushedPatternForming2013,Maciaroleaperiodicorder2006}.
Aperiodically ordered systems even play an important role in material science and technology \cite{Maciaroleaperiodicorder2006,MaciaExploitingaperiodicdesigns2012}.
Owing to their long-range order, they can display interesting physical properties such as a low electrical and thermal conductance \cite{DuboisPropertiesapplicationsquasicrystals2012,Maciaroleaperiodicorder2006}, low friction\cite{DuboisPropertiesapplicationsquasicrystals2012,MancinelliTribologicalpropertiesB2type2003} and high hardness \cite{MancinelliTribologicalpropertiesB2type2003}. 
Specific quasicrystalline systems have been shown to enhance solar cells \cite{BauerLightharvestingenhancement2013}, serve as a catalyst\cite{YoshimuraQuasicrystalapplicationcatalyst2002} and could allow for superconductivity \cite{KogaFirstObservationSuperconductivity2015,KamiyaDiscoverysuperconductivityquasicrystal2018}.

A general characteristic of aperiodic lattices is the clustering of Hamiltonian eigenvalues into so-called ``quasibands'' resembling Bloch bands of periodic systems \cite{DalNegroStructuralSpectralProperties2016}.
The corresponding eigenstates generally neither extend homogeneously across the system like Bloch states in regular crystals, nor do they decay exponentially like in disordered systems, and are therefore dubbed ``critical'' \cite{OstlundRenormalizationgroupanalysisdiscrete1984,MaciaNatureElectronicWave2014,Kohmoto1987PRB351020CriticalWaveFunctionsCantorseta,FujiwaraMultifractalwavefunctions1989,Maciaroleaperiodicorder2006}.
In specific cases, quasibands have been shown to originate from the localization of different eigenstates on similar repeated substructures in the system yielding similar eigenenergies \cite{dePruneleFibonacciKochPenrose2001,dePrunelePenrosestructuresGap2002,BandresTopologicalPhotonicQuasicrystals2016,VignoloEnergylandscapePenrose2016,Macia2017PSSb2541700078ClusteringResonanceEffectsElectronic}.
The formation of quasibands typically becomes less distinct with increasing spatial complexity, which in turn can be classified by the structure's spatial Fourier transform---accordingly altering from point-like to singular continuous to absolutely continuous \cite{Maciaroleaperiodicorder2006,DalNegro2012LPR6178DeterministicAperiodicNanostructuresPhotonics,DalNegro2013OpticsAperiodicStructuresFundamentals,Macia2017APB5291700079SpectralClassificationOneDimensionalBinary}.
The Fourier spectrum can further be connected to the system's integrated density of states by the ``gap labeling theorem'' \cite{Johnsonrotationnumberalmost1982,Delyonrotationnumberfinite1983,LuckCantorspectrascaling1989,BellissardGaplabellingtheorems1992,BaakeTracemapsinvariants1993}, which assigns characteristic integers to the gaps between quasibands.

As ordered lattice systems are truncated in space into finite setups, they may support the occurrence of eigenstates localized along their edges, energetically lying within spectral gaps.
In periodic systems, such \emph{edge states} (or `surface states' \cite{DavisonBasicTheorySurface1996}) may or may not appear depending on how the underlying translation symmetry is broken by the lattice truncation, that is, where in the unit cell the system is cut off \cite{ZakSymmetrycriterionsurface1985}.
In various types of periodic setups, edge states can also be given a topological origin in terms of nontrivially valued invariants (winding numbers) assigned to the neighboring Bloch bands \cite{BernevigTopologicalInsulatorsTopological2013}.
This has boosted an intensive research activity in the field of topological insulators \cite{KaneTopologicalOrderQuantum2005,HasanColloquiumTopologicalinsulators2010,QiTopologicalinsulatorssuperconductors2011} and the quest for interesting novel materials and applications \cite{TianPropertyPreparationApplication2017}, including e.\,g. robust lasing via topological edge-states in periodic photonic lattices \cite{St-JeanLasingtopologicaledge2017}.

Edge states may also be present between quasibands in aperiodic systems, as has been shown for binary 1D systems \cite{ZijlstraExistencelocalizationsurface1999,ElHassouaniSurfaceelectromagneticwaves2006,LeiPhotonicbandgap2007,Pang2010JOSAB272009PhotonicLocalizationInterfaceModes,MartinezSurfacesolitonsquasiperiodic2012} and recently demonstrated for 2D photonic quasiperiodic tilings \cite{BandresTopologicalPhotonicQuasicrystals2016}.
Notably, also here a topological character can be assigned to the edge states in correspondence to the system's bulk properties.
Indeed, a position-space based topological invariant, the so-called Bott index \cite{LoringDisorderedtopologicalinsulators2010a}, can be applied to aperiodically structured \cite{BandresTopologicalPhotonicQuasicrystals2016} or even amorphous systems \cite{AgarwalaTopologicalInsulatorsAmorphous2017}.
Moreover, for 1D quasiperiodic systems the winding of edge state eigenvalues across spectral gaps coincide with the gap labels mentioned above \cite{Johnsonrotationnumberalmost1982,NegiCriticalstatesfractal2001,KellendonkRotationnumbersboundary2005,KrausTopologicalEquivalenceFibonacci2012}, which have recently also been measured in scattering \cite{BabouxMeasuringtopologicalinvariants2017} and diffraction \cite{DareauRevealingTopologyQuasicrystals2017} experiments.
Remarkably, edge modes occur also as scattering resonances in open systems with different types of deterministic aperiodic order incorporating long-range couplings between lattice constituents, as demonstrated very recently in terms of the eigenmodes of full vectorial Green matrices \cite{WangEdgemodesscattering2018}.

The ubiquitous presence of edge states in aperiodic systems indicates that it derives primarily from the underlying geometrical structure and not from model-specific assumptions.
Departing from periodicity, however, there is no translation symmetry whose breaking (at the boundary) would provide a mechanism for edge state formation.
On the other hand, aperiodic systems are imbued with \emph{local} symmetries, that is, different spatially symmetric substructures are simultaneously present in the composite system which possesses many different domains of local symmetries.
Indeed, local ``patterns'' are known to occur repeatedly in deterministic aperiodic systems, as expressed by Conway's theorem \cite{GardnerMathematicalGames1977}.
In the specific case of 1D binary lattices, \emph{local reflection symmetry} is abundantly present and follows, at each scale, a spatial distribution closely linked to the underlying aperiodic potential sequence \cite{Morfonios2014ND7871LocalSymmetryDynamicsOnedimensional}.
The encoding of such local symmetries into generic wave excitations have recently been described within a theoretical framework of symmetry-adapted nonlocal currents\cite{Morfonios2017AP385623NonlocalDiscreteContinuityInvariant}, which obey generalized continuity equations \cite{Morfonios2017AP385623NonlocalDiscreteContinuityInvariant,Rontgen2017AP380135NonlocalCurrentsStructureEigenstates,Spourdalakis2016PRA9452122GeneralizedContinuityEquationsTwofield,WulfExposinglocalsymmetries2016} and whose stationary form allows for a generalization of the parity and Bloch theorems to locally restricted symmetries \cite{Kalozoumis2014PRL11350403InvariantsBrokenDiscreteSymmetries} as well as a classification of perfect transmission \cite{Kalozoumis2013PRA8833857LocalSymmetriesPerfectTransmission}.
In the context of finite, aperiodically ordered setups, an appealing question is whether a real-space picture for the formation---and thereby control---of edge states can be brought into connection with local symmetries.

In the present work we propose an intuitive real-space picture of the formation of quasibands and edge states in binary aperiodic tight-binding chains.
The approach is based on the analysis of eigenstate profiles in the limit of weak inter-site coupling. In this regime, eigenstates generically fragment, i.e., have non-negligible amplitudes only on a small number of sites, as we show by means of a perturbation theoretical treatment.
The amplitudes on these fragments are in almost all cases locally symmetric and can be identified as \emph{local resonator modes} (LRM), i.e., eigenmodes of local resonators embedded into the full chain. Here, a resonator denotes a substructure that can confine, at certain energies, the wavefunction within its interior.
The LRMs can be used to classify states, and those belonging to quasibands are composed of repeated LRMs hosted by resonators within the bulk, while edge states are composed of unique LRMs occurring on the edge.
We further investigate the reasons for the formation of quasibands by linking the energy $\epsilon$ of a state to that of its constituting LRMs, where the energy of an LRM is defined as its energy in the corresponding \emph{isolated} resonator.
From this finding, we see that the multiple occurrence of identical resonator structures automatically leads to the formation of quasibands by their capability of hosting identical (and thus degenerate) LRMs.
We further use this energetical insight to move a given edge state into a quasiband by performing tailored changes to the corresponding resonators on the edge.
The inference of those properties to moderate inter-site coupling depends on the type of aperiodic order used in the model.
We here apply the approach to the prominent representatives of three main classes of structural complexity:
Fibonacci, Thue-Morse, and Rudin-Shapiro chains, featuring point-like, singular continuous, and purely singular spatial Fourier spectra, respectively.

The paper is organized as follows. 
In \cref{sec:setup} we introduce our setup and show examples of quasibands and edge states in Fibonacci chains. 
We then develop our approach to edge states based on locally symmetric resonators and apply it to Fibonacci chains in \cref{sec:edgeModes}, to Thue-Morse chains in \cref{sec:thue} and to Rudin-Shapiro chains in \cref{sec:rudin}. 
In \cref{sec:commentsOnGenerality} we comment on the generality of our framework and on the connection to related work.
We conclude our paper and give an outlook in \cref{sec:conclusions}. 
A perturbative treatment demonstrating the localization onto reflection-symmetric resonators is provided in the appendix, together with further technical details including proofs of major statements, complementary explanations, and further comments.

\section{Prototype quasiperiodic order: The tight-binding Fibonacci chain} \label{sec:setup}

We consider a finite one-dimensional chain of $N$ sites with real next-neighbor hoppings $h_{m,n}$ described by the Hamiltonian
\begin{equation}
 H = \sum_n v_n \ket{n}\bra{n} + \sum_{|m-n| = 1} h_{m,n} \ket{m}\bra{n}
\end{equation}
where $v_n$ is the onsite potential of site $n$. In the basis of single site excitations $\ket{n}$, the above Hamiltonian $H$ can be written as a tridiagonal matrix
\begin{equation} \label{eq:tridiagonalHamiltonian}
	H = \begin{pmatrix}
	v_{1}	& h_{1,2}	& 0 & \dots	 & 0      \\
	h_{1,2}	& v_{2} 	& h_{2,3}  & \ddots & \vdots	  \\
	0 	& h_{2,3} 	& \ddots & \ddots & 0\\
	\vdots 	& \ddots & \ddots	 & \ddots & h_{N-1,N} \\
	0 	& \dots & 0	 & h_{N-1,N} & v_{N},
	\end{pmatrix} .
\end{equation}
Such a tight-binding chain is used in a plethora of interesting model systems, examples including the Aubry-Andre \cite{AubrySergeAnalyticitybreakingAnderson1980} model relevant in the study of localization\cite{roatiAndersonLocalizationNoninteracting2008} and the Su-Schrieffer-Heeger model, a simple prototypical chain supporting a topological phase \cite{asbothShortCourseTopological2016}. 
It also effectively describes one-dimensional arrays of evanescently coupled waveguides \cite{Szameit2012DiscreteOpticsFemtosecondLaser,EfremidisWavepropagationwaveguide2010}.
We here fix the hoppings to a uniform value $h$ and restrict the onsite elements to be ``binary'', that is, the sites are of two possible types $A$ and $B$, and the $v_n$ take on corresponding values $v_A$ and $v_B$, with the \emph{contrast} defined as
\begin{equation} \label{eq:contrast}
 c = \left|\frac{v_A - v_B}{h}\right|.
\end{equation}
Without loss of generality we will set $v_A \equiv 0$ and $v_B \equiv v$ throughout, having a single control parameter $c = |v/h|$ for a given chain.

In the following, we will investigate the spatial profiles of the eigenvectors $\ket{\phi^\nu} = \sum_n \phi^\nu_n \ket{n}$ of $H$ in relation to their eigenvalues $\epsilon_\nu$, given by
\begin{equation}
 H \ket{\phi^\nu} = \epsilon_\nu \ket{\phi^\nu},
\end{equation}
for chains with aperiodic order.
Note that $H$ represents a generic finite tight-binding chain;
the choice $h<0$ corresponds, e.g., to the kinetic energy of electrons on a lattice with onsite potential $v_n$, while $h>0$ (made here) can be used to model the coupling of photonic waveguides \cite{Szameit2012DiscreteOpticsFemtosecondLaser,EfremidisWavepropagationwaveguide2010} with propagation constants $v_n$.
Our analysis remains qualitatively unaffected by this choice.

%%%%%%%%%%%%%%%%%%%%%%%%%%%%%%%%%%%%%%%%%%%%%%%%%%%%%%%%%%%%%%%%%%%%%%%
\begin{figure}[!] %[t]
\centering
\includegraphics[max size={1\columnwidth}{1\textheight}]{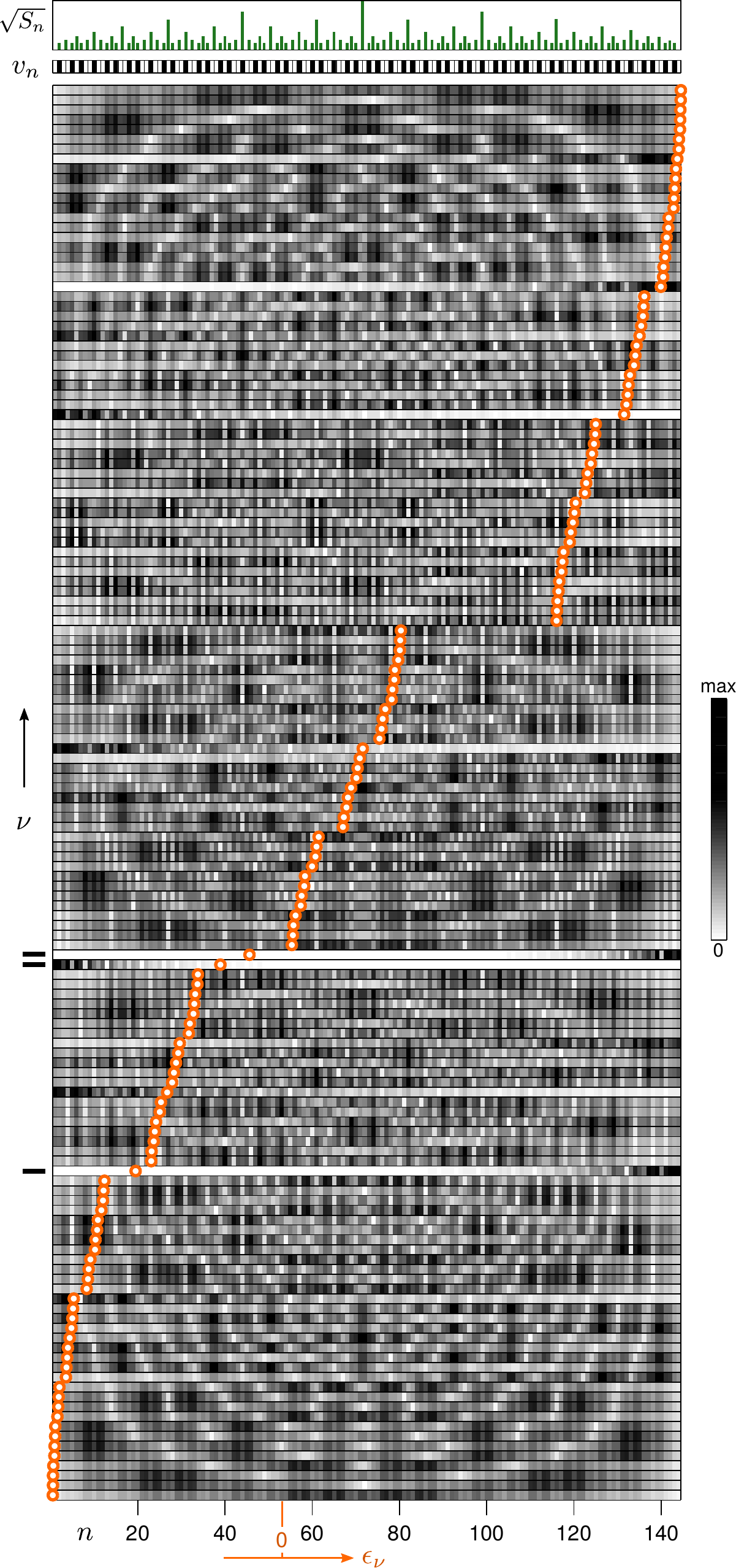}
\caption{
\textbf{Bottom}: 
Eigenstate map of a $N=144$-site Fibonacci chain at contrast $c = |v/h| = 1.5$ (hopping $h=0.1$): 
Each horizontal stripe shows $\sqrt{|\phi^\nu_n|}$ at sites $n$ for an eigenstate $\phi^\nu$ ($\nu = 1,2,\dots,N$).
The greyscale map is chosen such that it is possible to simultaneously observe the spatial features of edge as well as those of bulk states.
Superimposed are the eigenvalues $\epsilon_\nu$ (orange circles) in arbitrary units, with indicated origin $\epsilon = 0$. 
Edge modes are distinguishable as partially white stripes, with the most pronounced ones indicated by black horizontal bars on the left.
\textbf{Middle}: Potential $v_n$ represented by a stripe with white (black) boxes for $v_n = v_A = 0$ ($v_n = v_B = v$).
\textbf{Top}: Distribution of local reflection symmerty domains, represented by maximal domain size $S_n$ centered at position $n$, as explained in the text.}
\label{fig:fibo144StateMap}
\end{figure}
%%%%%%%%%%%%%%%%%%%%%%%%%%%%%%%%%%%%%%%%%%%%%%%%%%%%%%%%%%%%%%%%%%%%%%%

We start by presenting the eigenstates and spectral properties of a finite binary chain following the Fibonacci sequence \cite{DalNegro2013OpticsAperiodicStructuresFundamentals}, a prototypical case of quasiperiodic order.
This will serve as an initial point motivating the development of a local resonator approach at high contrast in the next section.
Starting with $A$, the sequence is constructed by repeatedly applying the inflation rule $A \to AB, B \to A$, resulting in $F = ABAABABAAB...$. 
This sequence is then mapped onto the onsite elements $v_{n}$ of the tight-binding chain.
The spectrum and eigenvectors of this chain are shown in \cref{fig:fibo144StateMap} for a moderate contrast of $c = 1.5$ and $N=144$ sites.
Despite the lack of periodicity, the eigenvalues cluster into so-called quasibands, owing to the long-range order present in the Fibonacci chain \cite{MaciaBarber2008AperiodicStructuresCondensedMatter}, and the spectrum attains a self-similar structure of quasibands and gaps in the $N \to \infty$ limit. For presentation reasons, we have here chosen $N$ large enough to anticipate this spectral feature, though small enough to visually discern the spatial characteristics of the eigenmodes.

The quasibands are occupied by bulk eigenmodes that extend along the interior of the chain. Those are known as ``critical states'', with a spatial profile lying between the exponential decay of modes in a randomly disordered chain and uniformly extending Bloch eigenmodes in periodic chains\cite{Kohmoto1987PRB351020CriticalWaveFunctionsCantorseta,FujiwaraMultifractalwavefunctions1989,MaciaNatureElectronicWave2014}.
Such modes have recently been shown to consist of locally resonating patterns (i.e., characteristic sequences of amplitudes) which occur on repeating segments of a quasiperiodic structures and are characteristic for a given quasiband \cite{VignoloEnergylandscapePenrose2016,dePrunelePenrosestructuresGap2002,MaceFractaldimensionswave2016}.
This is particularly visible for the bulk modes of the uppermost quasiband in \cref{fig:fibo144StateMap}.
A close inspection reveals that the bulk mode profiles tend to localize into locally reflection-symmetric peaks (see black subregions of high amplitude for a given mode). 
Those in turn follow the distribution of local symmetry axes (or centers of ``palindromes'' \cite{DroubayPalindromesFibonacciword1995}) which are hierarchically present in the Fibonacci chain \cite{Morfonios2014ND7871LocalSymmetryDynamicsOnedimensional}, as seen by comparison with the bar plot on the top.
Each bar shows the maximal size $S_n$ of a continuous domain of reflection symmetry centered at position $n$, where $n$ can refer here to sites ($n = 1,2,...$) or to links between sites ($n = 1.5,2.5,...$).
For instance, as the first few characters of $F$ are 
\begin{equation*}
	\mathrlap{\overbrace{\phantom{ABAABA}}^{\text{6 sites}}}
	ABA
	\mathrlap{\underbrace{\phantom{ABA}}_{\text{3 sites}}}
	ABABAAB,
\end{equation*}
we have $S_{3.5} = 6$ and $S_5 = 3$.

Within the gaps between quasibands there may appear spectrally isolated modes, reminiscent of gap modes localized on defects within a periodic lattice \cite{GrundmannPhysicsSemiconductorsIntroduction2016,PovinelliEmulationtwodimensionalphotonic2001}. For the example given in \cref{fig:fibo144StateMap}, i.e., an unperturbed but finite Fibonacci chain, the gap modes are known \cite{niuSpectralSplittingWavefunction1990,ZijlstraExistencelocalizationsurface1999,NoriAcousticelectronicproperties1986} to be localized at the edges, decaying exponentially into the bulk.

The control of edge states by local changes in the underlying potential sequence is a central aspect of this work.
Our approach is that, due to their localization, the occurrence and spectral position of edge states can be influenced by local modifications on the corresponding edge of the aperiodic lattice.
We demonstrate the feasibility of this approach in \cref{fig:fiboPhasonFlips} by using the following representation \cite{BabouxMeasuringtopologicalinvariants2017} of the Fibonacci potential sequence:
\begin{align}
 v_n = \frac{v_A + v_B}{2} + \frac{v_A - v_B}{2} \text{sign}\chi_n = \frac{v}{2}(1 - \text{sign}\chi_n), \label{eq:fiboPot} \\
 \chi_n(\varphi) = \cos( 2\pi\tau n + \varphi + \pi\tau ) - \cos( \pi\tau )
\end{align}
where $\tau = 2/(1 + \sqrt{5})$ is the inverse golden mean and the integer site index $n$ runs from $1$ to $N$.
By continuously varying the so-called ``phason'' $\varphi$, localized flips $AB \leftrightarrow BA$ are induced at discrete values of $\varphi$, forming a two-dimensional pattern in the $(n,\varphi)$ plane, see \cref{fig:fiboPhasonFlips}.
The finite chain of length $N$ constitutes a different segment (or ``factor'') of the infinite Fibonacci sequence after each flip \cite{BabouxMeasuringtopologicalinvariants2017}. This allows to investigate different Fibonacci-like configurations while maintaining a constant length $N$. In \cref{fig:fiboPhasonFlips}, we visualize the effect of these flips on the energy spectrum, shown in orange. As one can see, the gap states in the purple rectangle, which are localized on the right edge (not shown here), are influenced only by flips acting on this edge, marked by green circles. From bottom to top, the green flips (i) create the edge state (ii) and (iii) modify its energy and (iv) finally annihilate it. Note that in general for processes of type (ii) and (iii), the energetical change accompanying the change of the edge is stronger for a flip near to the edge than for a flip more distant to the edge.

%%%%%%%%%%%%%%%%%%%%%%%%%%%%%%%%%%%%%%%%%%%%%%%%%%%%%%%%%%%%%%%%%%%%%%%
\begin{figure}[t]
	\centering
	\includegraphics[max size={.99\columnwidth}{.7\textheight}]{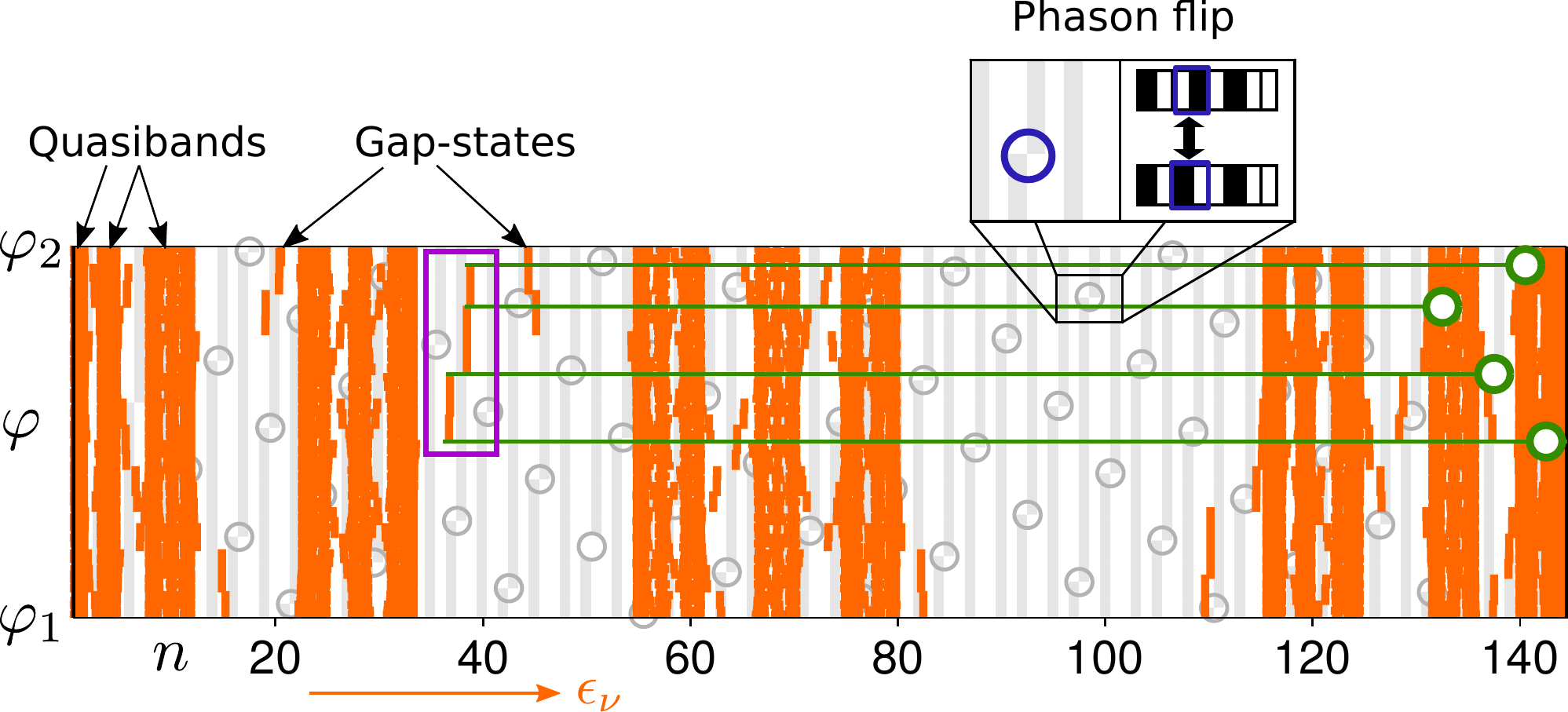}
	\caption{
 Spectrum in arbitrary units (orange) of a $N=144$-site Fibonacci chain for varying phason $\varphi$ in \cref{eq:fiboPot} between the values (chosen for presentation reasons) $\varphi_{1} = 2.4097$ and $\varphi_{2} = 5.5513$ for contrast $c = 1.5$, superimposed on the variation of the onsite potential $v_n$ ($v_A$:white, $v_B$:light gray).
 Dark gray circles indicate local flips $AB \leftrightarrow BA$ in the chain. The inset shows a magnified view on one representative flip. All together, there are $71$ such flips between $\varphi_{1}$ and $\varphi_{2}$.
The flips indicated by green circles create/annihilate (when close to the edge) or energetically shift (when further from the edge) the gap state in the purple box.
	}
	\label{fig:fiboPhasonFlips}
\end{figure}
%%%%%%%%%%%%%%%%%%%%%%%%%%%%%%%%%%%%%%%%%%%%%%%%%%%%%%%%%%%%%%%%%%%%%%%

The occurrence of such edge-localized gap states in a finite 1D quasiperiodic potential was recently very elegantly described within a scattering setting \cite{BabouxMeasuringtopologicalinvariants2017,DareauRevealingTopologyQuasicrystals2017} in a continuous system as a consequence of a resonance condition when varying the phason $\varphi$.
At the same time, the connection of the winding of $\varphi$ to invariant integers labeling the spectral gaps of the quasiperiodic structure through the so-called ``gap labeling'' theorem \cite{FuPerfectselfsimilarityenergy1997}, renders the nature of the 1D edge states topological \cite{KrausTopologicalEquivalenceFibonacci2012}.
On the other hand, the flip-induced edge state creation/annihilation demonstrated in \cref{fig:fiboPhasonFlips} suggests that their origin could also be explained by viewing chain edges as a generalized type of ``defects'' to the quasiperiodic long-range order.
In the following, we will develop this idea in terms of the prototype Fibonacci chain.
Our aim is to provide a simple and unifying real-space picture for the appearance of edge states in the energy gaps of non-periodic structures.
Contrary to topological methods, as employed for one-dimensional systems in general e.g., in Refs. \onlinecite{LevyTopologicalpropertiesFibonacci2015,KrausTopologicalEquivalenceFibonacci2012,Verbin2015PRB9164201TopologicalPumpingPhotonicFibonacci,Baake2012JMP5332701SpectralTopologicalPropertiesFamily,Parto2018PRL120113901EdgeModeLasing1DTopological,Blanco-Redondo2016PRL116163901TopologicalOpticalWaveguidingSilicon,Johnsonrotationnumberalmost1982,NegiCriticalstatesfractal2001,KellendonkRotationnumbersboundary2005,BabouxMeasuringtopologicalinvariants2017,DareauRevealingTopologyQuasicrystals2017}, our approach does not rely on topology, but aims at connecting the real-space structure of deterministic aperiodic binary chains and their local symmetries to their quasibands and edge states.

\section{Edge modes from truncated local resonators} \label{sec:edgeModes}

The analysis of eigenstates at high contrast $c$ [see \cref{eq:contrast}] is at the heart of our approach, revealing structural information that would remain hidden at lower contrast.
Once this information is retrieved, we leverage it to develop a generic framework for the understanding and manipulation of quasibands and edge states in binary tight-binding chains.

In the following, we will focus on a Fibonacci chain, choosing a relatively small size for easier treatment and visualization. The slight modifications needed for the treatment of longer chains are commented on in \cref{appendix:longerChainsComments}.
We split the presentation into three subsections, covering the concept of fragmentation (\cref{sec:degPertTheory}), local resonator modes (\cref{sec:locResAndLocSym}), the structural control of edge states (\cref{sec:controlOfEdgeStates}), and the behavior at low contrast (\cref{sec:lowContrastBehavior}).

\subsection{Eigenstate fragmentation from degenerate perturbation theory} \label{sec:degPertTheory}
%%%%%%%%%%%%%%%%%%%%%%%%%%%%%%%%%%%%%%%%%%%%%%%%%%%%%%%%%%%%%%%%%%%%%%%
\begin{figure}[ht!]
\centering
\includegraphics[max size={.99\columnwidth}{0.7\textheight}]{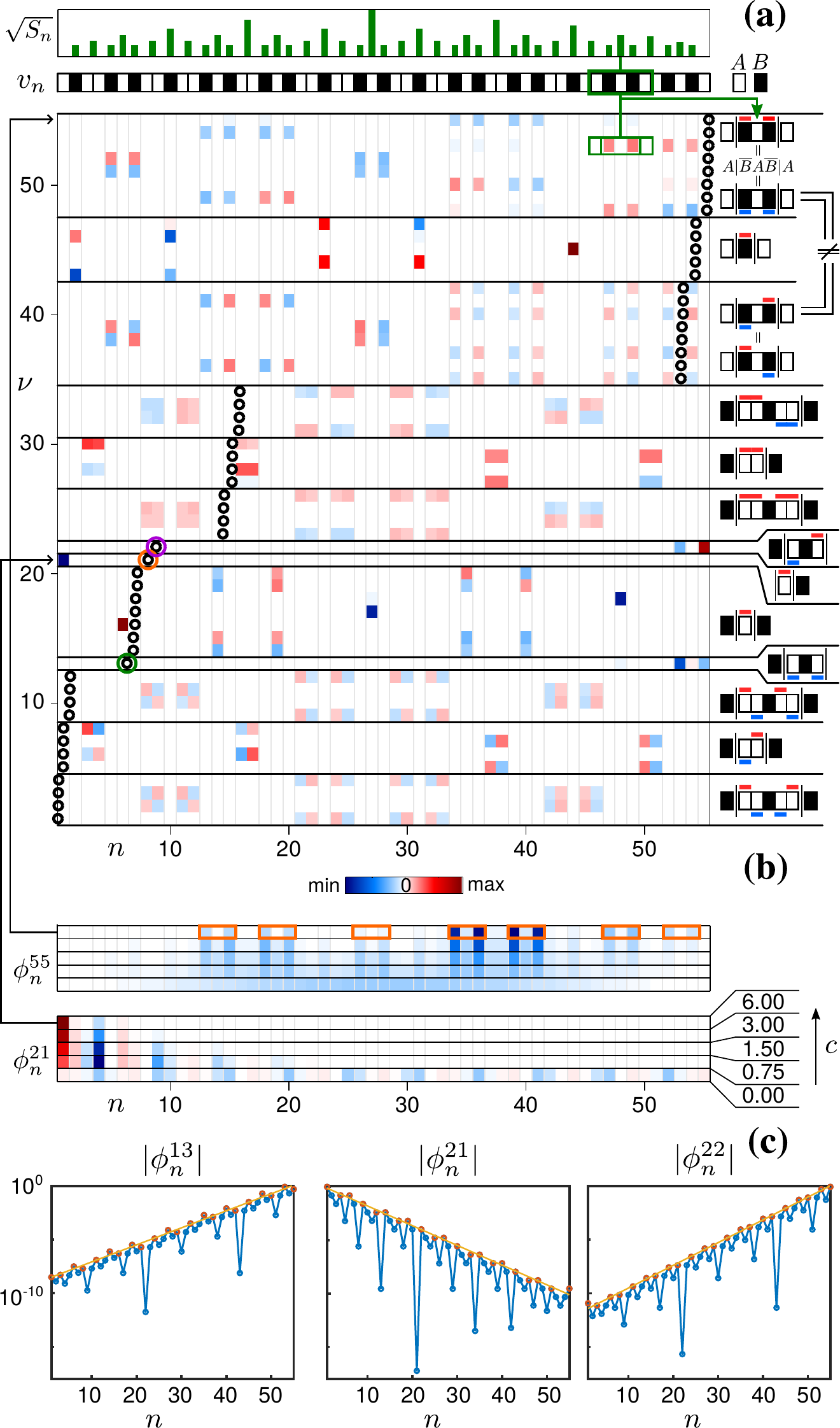}
\caption{
\textbf{(a)} Eigenstate map, potential, and local symmerty distribution (bottom to top) like in \cref{fig:fibo144StateMap} but for an $N=55$-site Fibonacci chain of contrast $c = 6$, and with density $|\phi_{n}^{\nu}|^2$ color-coded by the signs of $\phi_{n}^{\nu}$ shown in the eigenstate map. 
Horizontal lines separate the eigenstates into groups according to quasibands and gap states, with corresponding characteristic local resonator modes (LRMs) visualized on the right.
The green box indicates the correspondence of the state $\phi^{53}$ to the LRM $A|\overline{B}A\overline{B}|A$ (see text). 
The three edge modes of the setup are marked by colored circles.
\textbf{(b)} Amplitudes of states $\phi^{55}$ (extended in the bulk) and $\phi^{21}$ (localized at the left edge) for different contrast values. \textbf{(c)} Absolute values of amplitudes of the three edge states $\phi^{13}$, $\phi^{21}$, $\phi^{22}$ with corresponding localization lengths $0.37$, $0.43$, $0.49$, obtained by fitting a line (orange) to local maxima (orange dots) on a logarithmic scale.}
\label{fig:fibo55StateMap}
\end{figure}

Our starting point is an analysis of eigenstates at high contrast.
Those are shown in \cref{fig:fibo55StateMap}\,(a) for a $9$th generation Fibonacci chain ($N=55$ sites) with relatively high contrast $c=6$. 
We see that each eigenstate is pinned to a small number of sites where it has non-negligible amplitude, practically vanishing on the remaining sites.
This is quite different from the states at low contrast (like in \cref{fig:fibo144StateMap}) which are smeared out along the whole chain.
An impression of how the transition between those two regimes takes place is given in \cref{fig:fibo55StateMap}\,(b), showing the amplitudes of a bulk $(\phi^{55})$ and edge $(\phi^{21})$ state for varying contrast.
When increasing the contrast, the spatial profile of the bulk state becomes gradually \emph{fragmented}:
The amplitudes on $A$-sites become suppressed, and a characteristic remnant of the initial distribution appears on a subset of $B$-sites.
Fragmentation with increasing contrast $c$ also occurs for the edge state, with the difference that here the amplitudes on $B$ sites become suppressed, and that there is only a \emph{single} fragment remaining; in the present case the $A$-site on the left edge.

The fragmentation at high contrast can be understood by means of a quantitative perturbation-theoretical treatment provided in \Cref{app:perturbationTheory}, applying to generic binary tight-binding chains. In the following, we outline the main steps of this analysis.
In order to apply perturbation theory, the Hamiltonian is written as $H_{0} + h \cdot H_{I}$, where $H_{0}$ solely contains the diagonal part of $H$, i.e., isolated sites, while $H_{I}$ has $1$'s only on the first sub- and superdiagonal.
For convenience, we then rescale $H' = H/v = H'_{0} + 1/c \cdot H_{I}$, changing only the energies $\epsilon^{\nu} \rightarrow \epsilon^{\nu}/v$, but leaving all eigenstates unaffected. For large contrast $c$, $H_{I}$ then acts as perturbation to $H_{0}$, and we can expand an eigenstate $\ket{\phi}^{(i)}\; (1 \le i \le N)$ of $H \in \mathbb{R}^{N\times N}$ as well as its energy $\en{i}{}$ as
\begin{align} \label{eq:perturbationSeriesStatesMain}
\pState{i}{} & = \pState{i}{0}  + \lambda \pState{i}{1}  + \lambda^2 \pState{i}{2}  + \ldots \\
\en{i}{} &= \en{i}{0} + \lambda \en{i}{1} + \lambda^2 \en{i}{2} + \ldots \label{eq:perturbationSeriesEnergiesMain},
\end{align}
which, inserted into the Schrödinger equation, yields the perturbation series.
Due to the binary nature of $H_{0}$, the only two eigenvalues of $H_{0}$, $0$ and $1$, are highly degenerate. In particular, before any higher-order state correction can be computed, the so-called ``correct zeroth-order states''\cite{Hirschfelder1974JCP601118DegenerateRSPerturbationTheory,Silverstone1981PRA231645PracticalRecursiveSolutionDegenerate}
\begin{equation}
\pState{i}{0} = \lim\limits_{\lambda \rightarrow 0}{\pState{i}{}}
\end{equation}
must be found. Although these are superpositions of the known eigenstates of $H_{0}$, the corresponding expansion coefficients are in general unknown at the beginning of the treatment\cite{Hirschfelder1974JCP601118DegenerateRSPerturbationTheory,Silverstone1981PRA231645PracticalRecursiveSolutionDegenerate}.
In degenerate perturbation theory\cite{Hirschfelder1974JCP601118DegenerateRSPerturbationTheory,Silverstone1981PRA231645PracticalRecursiveSolutionDegenerate}, the correct zeroth-order states can be found by diagonalizing a series of recursively\cite{Silverstone1981PRA231645PracticalRecursiveSolutionDegenerate} defined matrices $\mathcal{H}_{1},\mathcal{H}_{2},\ldots{}$\, .
More precisely, the matrices $\mathcal{H}_{n}$ are constructed from the perturbation series up to $n$-th order by demanding that the correct zeroth-order states fulfill certain consistency requirements. One then has to solve
\begin{equation} \label{eq:consistencyRequirements}
	\leftRightBraket{0}{\phi^{(i)}|\mathcal{H}_{n}|\phi^{(j)}}{0} = \delta_{i,j} \en{i}{n}, \; \forall \; i,j \in g_{n}
\end{equation}
up to the order $n$ in which all degeneracies are lifted, where $\pState{g_{n}}{}$ is the set of states which are degenerate up to $n$-th order.
Now, contrary to simple examples where the degeneracy is resolved in first order (where $\mathcal{H}_{1}$ is simply given by  $H_{I}$), the degeneracies of binary chains are usually completely resolved only in higher orders.
As a result, the procedure of obtaining the correct zeroth-order states is rather complex\cite{Hirschfelder1974JCP601118DegenerateRSPerturbationTheory,Silverstone1981PRA231645PracticalRecursiveSolutionDegenerate}.

In \cref{app:perturbationTheory}, we explicitly follow this procedure of finding the correct zeroth-order states up to third order and investigate the first-order state corrections as well.
This procedure provides a high degree of understanding of how binary chains, their local symmetries, the fragmentation of eigenstates as well as their symmetries are connected.
In particular, it is shown that each $\pState{i}{0}\; (1\le i \le N)$, with $N$ being the length of the chain, has non-vanishing amplitudes either only on either $A$-sites or only on $B$-sites (see Statement 2 of \cref{app:perturbationTheory}).
Thus, we can assign each $\pState{i}{0}$ a type $T \in \{A,B\}$, depending on the sites on which it has non-vanishing amplitudes.
The spatial distribution of the non-vanishing amplitudes can be further specified by introducing the concept of \emph{maximally extended blocks of potentials of the same kind} (MEBPS). An example for such MEBPS are 
\begin{equation*}
	\overbracket{A}^1\underbracket{B}_{1}\overbracket{AA}^{2}\underbracket{B}_{1}\overbracket{A}^{1}\underbracket{B}_{1}\overbracket{AAA}^{3}
\end{equation*}
where MEBPS of type $A$ ($B$) are marked by over (under) brackets, with respective length denoted by integers.
An important conclusion of the analysis is that a given correct zeroth-order state $\pState{i}{0}$ of type $T$ can have non-vanishing amplitude on MEBPS of type $T$ and of individual length $l_{1},l_{2},\ldots{},l_{n}$ only if there exist integers $1 \le k_{j} \le l_{j},\; 1 \le j \le n$ such that (see Statement 2 of \cref{app:perturbationTheory})
\begin{equation*}
\frac{k_{1}}{l_{1} + 1} = \frac{k_{2}}{l_{2} + 1} = \ldots{} = \frac{k_{n}}{l_{n} + 1}.
\end{equation*}
As one can easily show, for $l_{j}\le 6$, this is possible only if all $l_{j}$ are identical or if all $l_{j}$ are odd. As a consequence, for the Fibonacci setup, where $l_{j} \le 2$, any $\pState{i}{0}$ can simultaneously have non-vanishing amplitudes only on MEBPS of length $1$ \emph{or} of length $2$, but not on both. As a result, any state $\pState{i}{0}$ has vanishing amplitudes on a large number of sites, ultimately leading to its fragmentation. A closer evaluation reveals that this fragmentation usually persists under inclusion of the first-order state corrections $\pState{i}{1}$: If $\pState{i}{0}$ has non-vanishing amplitudes only on $A$ ($B$) sites, then $\pState{i}{1}$ will have non-vanishing amplitudes only on a small number of $B$ ($A$) sites.
As, at high contrast, $\pState{i}{} \approx \pState{i}{0} + \pState{i}{1}$, our perturbation theoretical treatment thus explains the origin of the fragmentation of eigenstates in binary tight-binding chains in a rigorous quantitative way.
Compared to the renormalization group approach which has been used\cite{kaluginElectronSpectrumOnedimensional1986,piechonAnalyticalResultsScaling1995,liuBranchingRulesEnergy1991,niuSpectralSplittingWavefunction1990,niuRenormalizationGroupStudyOneDimensional1986} to explain the fractal nature of the spectrum of the Fibonacci chain and which needs to be tailored to the system of interest, we stress that our perturbation theoretical approach is much broader and can be used to treat all binary chains where fragmentation occurs.
We demonstrate this generality by further analyzing the spatial details of those fragmented states and connecting them to the local symmetries of the chain and their environment (neighboring sites) in \cref{app:perturbationTheory} (see Statements 3 and 4 as well as following text).

\subsection{Local resonator modes and local symmetry} \label{sec:locResAndLocSym}
Relying on the above perturbation theoretical results, we now promote an intuitive picture for the cause of fragmentation, where a chain is viewed as a collection of \emph{local resonators}. The eigenvalues of this chain are then approximately given by the union of the eigenvalues of the individual resonators. As a consequence, each eigenvector of the full chain with energy $\epsilon$ then has very small amplitude on resonators whose energy differs strongly from $\epsilon$.
A local resonator is here a discrete substructure which, at high contrast, confines the wavefunction within its interior for a certain eigenenergy.
The simplest case consists of a three-site structure $B|A|B$, where the vertical lines demarcate the resonator ``cavity'' (the inner part $A$) from its ``walls'' (the outer parts $B$).
The resonator character of this particular substructure is analyzed in more detail in \cref{appendix:discreteResonators}.
Two such resonators can be combined to form a \emph{double} resonator $B|ABA|B$, formed by overlapping one wall of each $B|A|B$. 
Note that, for a substructure to function as a local resonator, either (i) the resonator wall and its next-neighboring site in the cavity must be of different type or (ii) the resonator wall must coincide with one of the edges of the chain ($|X$ or $X|$, with $X = A,B$).

We now link the resonator concept to the eigenstate fragmentation seen in \cref{fig:fibo55StateMap} (a).
As an example, each fragment of $\phi^{55}$ (indicated by orange rectangles in \cref{fig:fibo55StateMap}\,(b)) is localized on the $B$'s of the local  resonator $A|BAB|A$.
We denote this fact as $A|\overline{B}A\overline{B}|A$, which represents an eigenmode of the isolated resonator $A|BAB|A$ and which we will call a \emph{local resonator mode} (LRM).
The overlines here indicate sites with equally signed and relatively much higher amplitude than non-overlined sites; see \cref{appendix:discreteResonators}.
At high contrast the state $\phi^{55}$ can thus be seen as a collection of identical, non-overlapping LRMs $A|\overline{B}A\overline{B}|A$ (one on each fragment) with negligible amplitudes on the parts in between.
In the same manner, each eigenstate shown in \cref{fig:fibo55StateMap}\,(a) is composed of identical LRMs.
In particular, we notice that \emph{all states in a given quasiband are characterized by the same resonator mode}, different from that of other quasibands.
This is shown on the right side of the figure, where LRMs are depicted schematically.
Here, overlines and underlines in an LRM such as $A|\overline{B}A\underline{B}|A$ denote amplitudes of opposite sign.
Contrary to the bulk states of quasibands, \emph{edge states feature unique resonator modes} which are not repeated elsewhere in the chain, with the underlying resonators located at (one of) the chain edges.
We thereby distinguish these two types of LRMs as bulk and edge LRMs ($b$-LRMs and $e$-LRMs, respectively).

The fact that each quasiband is characterized by a single resonator mode can be understood as follows.
If a given eigenstate $\ket{\phi}$ of energy $\epsilon$ is composed of $K$ non-overlapping LRMs such that $\ket{\phi}$ has very low amplitude on the next-neighboring sites of the corresponding resonators, then each of the energies $\epsilon_{k=1,2,\dots,K}$ of those LRMs (that is, their eigenenergies in the \emph{isolated} underlying local resonator) must fulfill $\epsilon_{k} \approx \epsilon$.
This statement is proven rigorously in \cref{appendix:submatrixTheorem}.
Now, applying the perturbative treatment of \cref{app:perturbationTheory} to the chain of \cref{fig:fibo55StateMap} shows that for any two LRMs to be energetically nearly degenerate they must be identical.
Thus, each quasiband---having quasidegenerate modes at high contrast---is characterized by a single LRM.

A similar reasoning explains why bulk states of quasibands are more spatially extended compared to edge states lying in spectral gaps.
Indeed, due to the quasiperiodicity of the Fibonacci chain, any local resonator (that is, a binary substructure) in the bulk occurs \emph{repeatedly} (though not periodically) along the chain---specifically, at spacings smaller than double its size.
This is a general result known as Conway's theorem \cite{GardnerMathematicalGames1977}.
Thus, a $b$-LRM hosted by a given local resonator will also be correspondingly repeated along the chain.
If the $b$-LRM has energy $\epsilon_k$, then a state with energy $\epsilon \approx \epsilon_k$ is allowed to simultaneously occupy all copies of this $b$-LRM, and is accordingly spatially extended.
Edge states, on the other hand, consisting of $e$-LRMs at high contrast, correspond to local resonators induced by the presence of an edge, which \emph{breaks the quasiperiodicity}.
Due to this truncation at the edge (e.\,g. of the type $\cdots | \cdots X|$ with $X = A,B$ at the right edge, compare \cref{fig:fibo55StateMap} (a)), the $e$-LRM generally does not match the energy of any $b$-LRM, and therefore cannot occupy multiple local resonators in the bulk:
The eigenstate is confined to the edge, lying energetically isolated in a gap.
This is visualized by the marked edge states in \cref{fig:fibo55StateMap} (a).

A remarkable observation in \cref{fig:fibo55StateMap}\,(a) is that each local resonator hosting a $b$-LRM is reflection symmetric, such that all isolated $b$-LRMs have definite parity; see schematic on the right.
This means that, at high contrast, the fragments (occupied local resonators) of quasiband eigenstates feature, to a very good approximation, local parity with respect to \emph{local reflection symmetries} of the chain.
The positions and sizes of all such local symmetries are shown in the top panel of \cref{fig:fibo55StateMap}\,(a).
An example is given by the state $\phi^{53}$ which is locally symmetric around, e.\,g., the position $n = 48$, and corresponds to the $b$-LRM $A|\overline{B}A\overline{B}|A$, as indicated by the green boxes.
This behavior is predicted by the degenerate-perturbative treatment of \cref{app:perturbationTheory}. There, we rigorously show that each $\pState{i}{0}$ is locally parity symmetric individually on each MEBPS (see Statement 1 of \cref{app:perturbationTheory}), which itself is by definition a locally symmetric structure. While an MEBPS usually comprises only a few sites, we explicitly give examples for cases where $\pState{i}{0}$ is locally symmetric also in larger domains.
One of this examples explains the local symmetry of LRMs such as $A|\overline{B}A\overline{B}|A$ (see Statement 3 of \cref{app:perturbationTheory}).
Overall, the perturbation theoretical treatment demonstrates the crucial role of local reflection symmetries in the eigenstate localization profiles of binary aperiodic chains.
A promising direction of research would thus be to treat this class of systems within the recently developed theoretical framework of local symmetries \cite{Kalozoumis2014PRL11350403InvariantsBrokenDiscreteSymmetries,Morfonios2017AP385623NonlocalDiscreteContinuityInvariant,Rontgen2017AP380135NonlocalCurrentsStructureEigenstates}.

\subsection{Structural control of edge states} \label{sec:controlOfEdgeStates}

%%%%%%%%%%%%%%%%%%%%%%%%%%%%%%%%%%%%%%%%%%%%%%%%%%%%%%%%%%%%%%%%%%%%%%%
\begin{figure}[t]
\centering
\includegraphics[max size={1\columnwidth}{.6\textheight}]{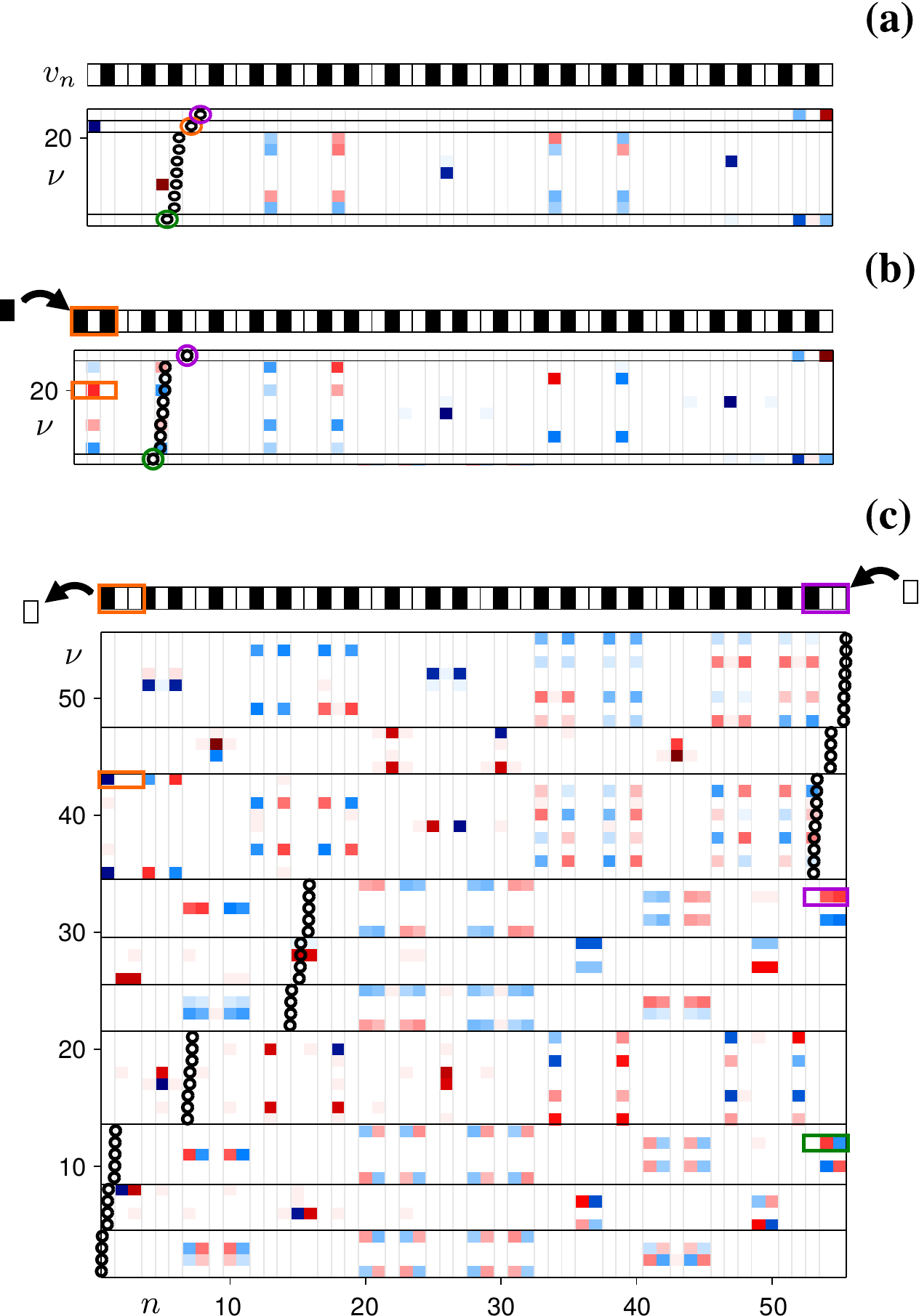}
\caption{
Selective annihilation of edge states of the Fibonacci chain in \cref{fig:fibo55StateMap}\,(a); see text for details.
\textbf{(a)} Original chain and excerpt of the state map (states $13$--$22$), with edge states marked by colored circles. 
\textbf{(b)} Annihilation of left edge state (orange) by attaching a $B$ site to the left of the chain. 
\textbf{(c)} Complete annihilation of edge states by removing (adding) an $A$ ($B$) site on the left (right) end of the chain. Color coding of each subfigure is as in \cref{fig:fibo55StateMap}.}
\label{fig:fibo55Modified}
\end{figure}
%%%%%%%%%%%%%%%%%%%%%%%%%%%%%%%%%%%%%%%%%%%%%%%%%%%%%%%%%%%%%%%%%%%%%%%

Having understood the real-space mechanism for the formation of edge-localized gap states in Fibonacci chains, we can now utilize this insight to systematically manipulate these states.
In particular, let us show how structural modifications at the edges of a Fibonacci chain can selectively ``annihilate'' a given edge mode.
Note that whether or not one considers a particular state localized (near or on) the edge to lie in an energetical gap is obviously a question of the scale under consideration.
This is due to the fact that any finite chain naturally has a discrete spectrum, for which, strictly speaking, no continuous energy-bands are defined.
In the remainder of this work, we will solely consider states as gap-edge ones provided that, at a contrast of $c = 6$, they lie in a clearly visible energetical gap.
This simplifies our treatment, and in \cref{appendix:longerChainsComments} we comment on the extensions of this simplification.

For definiteness, we consider the edge state $\phi^{21}$ (orange circle) of the chain in \cref{fig:fibo55Modified}\,(a) which simply focuses on states $\nu = 13$ to $22$ of \cref{fig:fibo55StateMap}\,(a).
This state corresponds to the the $e$-LRM $|\overline{A}|B$ (see right side of \cref{fig:fibo55StateMap}\,(a)) and is exponentially localized, as shown in \cref{fig:fibo55StateMap} (c). 
The underlying resonator $|A|B$ is a left-truncated version of the resonator $B|A|B$, which hosts the $b$-LRM $B|\overline{A}|B$ characterizing the quasiband below (states $14$ to $20$). 
Now, as shown in \cref{fig:fibo55Modified}\,(b), if we \emph{complete} the resonator $|A|B$ into $B|A|B$ by attaching a $B$ site to the left end of the chain, then the edge can accommodate the $b$-LRM $B|\overline{A}|B$ instead of the $e$-LRM $|\overline{A}|B$.
Consequently, the edge mode is replaced by a bulk mode of the quasiband.
In other words, the edge state is ``absorbed'' into a quasiband by converting the $e$-LRM of the former to the $b$-LRM of the latter through a structural modification at the edge.
This intuitive procedure can be applied similarly for the other two pronounced edge states (green and purple circles) in \cref{fig:fibo55Modified}\,(a), by completing the corresponding edge local resonator into a bulk one.
Thus, the selective control of a specific edge state is possible.

Let us note, however, that in most cases such a selective annihilation of one edge state leads to the creation of one (or more) other edge state(s) located elsewhere in the spectrum, as a result of the edge modification.
For example, the left edge of the modified chain in \cref{fig:fibo55Modified}\,(b) features the resonator $|BAB|A$, which is a truncated version of $A|BAB|A$ hosting the $b$-LRM $A|\overline{B}A\overline{B}|A$, thus yielding a new gap-edge state (not shown).

Interestingly, in special cases this issue can be overcome by \emph{exploiting the local symmetry of bulk resonators}, as we now explain using the example shown in \cref{fig:fibo55Modified}\,(c).
Here, an $A$ site is attached to the right edge, which formerly hosted the $e$-LRMs $B|\overline{A}B\overline{A}|$ and $B|\underline{A}B\overline{A}|$ (cf. \cref{fig:fibo55Modified}\,(a)), corresponding to the edge states $\phi^{13}$ (green) and $\phi^{22}$ (purple), respectively.
In the modified chain, the right edge features a local resonator $B|AA|$.
The key point now is that this resonator supports two LRMs, $B|\overline{A}\underline{A}|$ and $B|\overline{A}\overline{A}|$, which are degenerate to the b-LRMs $B|\overline{A}\underline{A}B\overline{A}\underline{A}|B$ and $B|\overline{A}\overline{A}B\underline{A}\underline{A}|B$, respectively, due to the reflection symmetry of the underlying resonator $B|AABAA|B$.
This symmetry-induced degeneracy is shown rigorously in \cref{appendix:symmetricMatrices}.
As a result of their degeneracy, the respective e- and $b$-LRMs can combine linearly to compose quasiband states, as seen in \cref{fig:fibo55Modified}\,(c); see states in first and third quasiband from bottom with marked edge resonators.
The same procedure can be performed on the left edge by removing an $A$ site from it, leaving the edge resonator $|B|A$ hosting $|\overline{B}|A$ which is degenerate to $A|\underline{B}A\overline{B}|A$ (see state in top quasiband with orange marked left edge resonator in \cref{fig:fibo55Modified}\,(c)).
Note that both (right and left) edge modifications above are consistent with the Fibonacci order:
The resulting chain is simply obtained from the former one by a single-site shift to the right along the infinite Fibonacci chain.
We thus have a case of finite Fibonacci chain with \emph{no edge-localized gap states}. 

From the above it is clear that edge states in binary quasiperiodic chains can now be rigorously understood and manipulated within the framework of local resonators.
Structural creation and annihilation represents a first fundamental step in edge state control.
Indeed, once an edge state is established, its energetic position within a gap can further be tuned by allowing for non-binary (freely varying) potentials at the edges, while leaving the quasibands intact.

\subsection{Behavior at low contrast} \label{sec:lowContrastBehavior}

%%%%%%%%%%%%%%%%%%%%%%%%%%%%%%%%%%%%%%%%%%%%%%%%%%%%%%%%%%%%%%%%%%%%%%%
\begin{figure}[t]
\centering
\includegraphics[max size={.99\columnwidth}{0.7\textheight}]{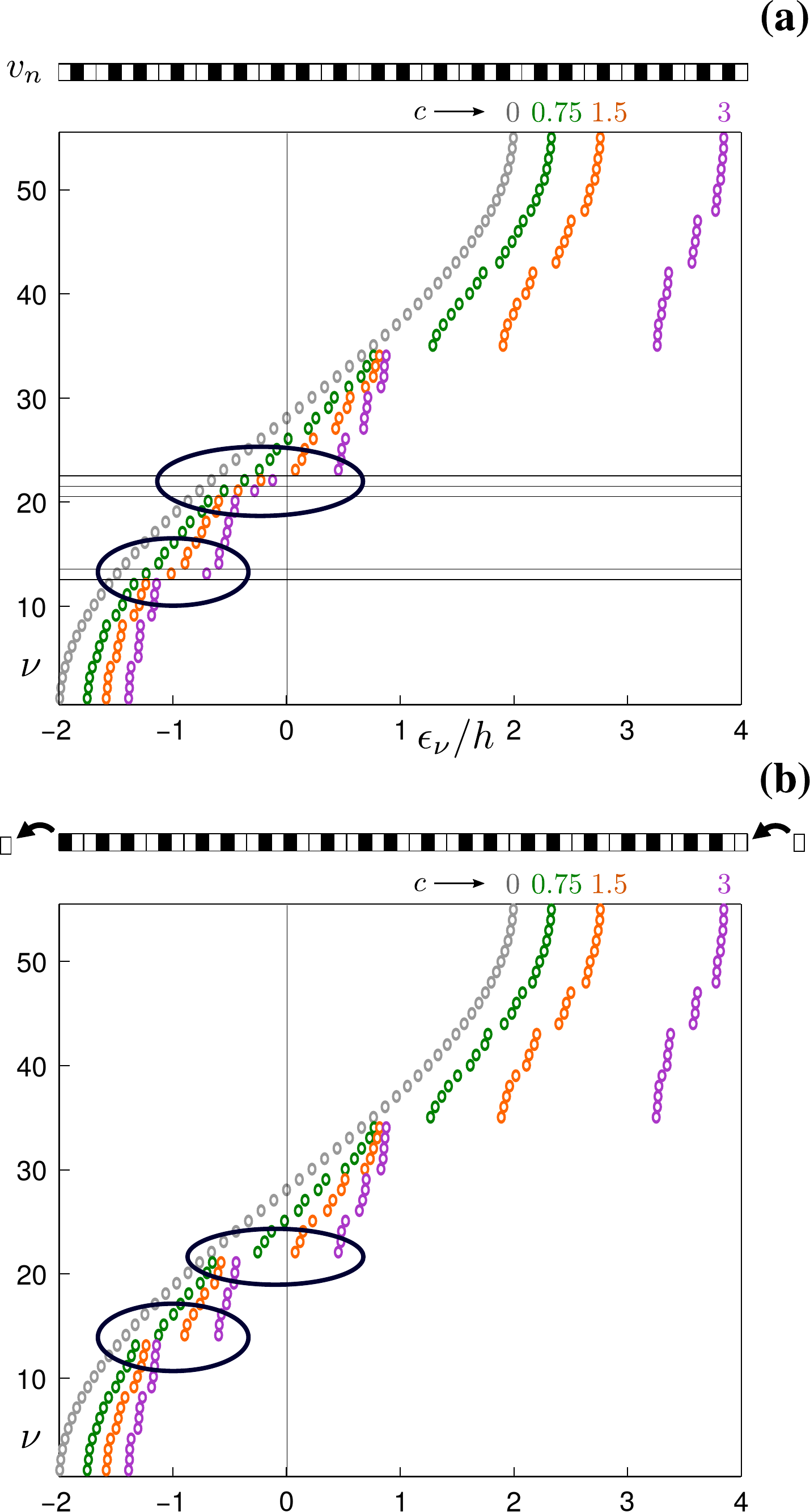}
\caption{Eigenvalue spectrum for different values of the contrast $c$ 
\textbf{(a)} for the original Fibonacci chain of \cref{fig:fibo55StateMap}\,(a), with the three gap modes indicated by horizontal stripes, and \textbf{(b)} for the modified chain of \cref{fig:fibo55Modified}\,(c). The ellipses highlight the removal of gap modes by the modification, for all contrast levels.}
\label{fig:fibo55ContrastVariation}
\end{figure}
%%%%%%%%%%%%%%%%%%%%%%%%%%%%%%%%%%%%%%%%%%%%%%%%%%%%%%%%%%%%%%%%%%%%%%%

The local understanding and controllability of edge states at high contrast levels raises the question if these properties are retained also at lower contrast.
To address this, in \cref{fig:fibo55ContrastVariation}\,(a) we show the eigenvalue spectrum of the original Fibonacci chain studied previously [\cref{fig:fibo55StateMap}\,(a)] for varying contrast $c$. 
As we see, gap states (localization on edges not shown here) are clearly distinguished for all contrast levels. 
Figure \ref{fig:fibo55ContrastVariation}\,(b) shows the spectrum of the modified Fibonacci chain of \cref{fig:fibo55Modified}\,(c)---where all edge states were annihilated at high contrast---for the same contrast values. 
Also here the structure of the spectrum is retained with varying $c$.
In particular, a real-space analysis (not shown here) confirms that all quasiband states in the modified chain remain extended in the bulk for varying $c$.
The effect of lowering the contrast is merely a reduction in the fragmentation of the eigenstate profiles which become more smeared out into regions between the LRMs defined at high contrast.

This finding indicates that the mechanism of edge state formation via truncated local resonators based on an analysis at high contrast remains valid also for lower contrast, though ``hidden'' due to the spatial smearing of the states.
In other words, the contrast parameter can be used as an intermediate tool to manipulate edge states in binary aperiodic model chains:
It is first ramped up to reveal the eigenstate structure in terms of LRMs subject to modifications, and then ramped down again with the bulk/edge-state separation retained.

\section{Application to non-quasiperiodic chains} \label{sec:nonQuasiperiodic}
Featuring a point-like spatial Fourier spectrum (rendering it, by definition, a quasicrystal\cite{Maciaroleaperiodicorder2006,DalNegro2012LPR6178DeterministicAperiodicNanostructuresPhotonics}), the Fibonacci chain studied above represents the class of lowest structural complexity when departing from periodicity towards disorder, as mentioned in \cref{sec:Introduction}.
The question naturally arises whether the local resonator framework developed in \cref{sec:edgeModes}, distinguishing edge from bulk states via LMRs, applies also to other classes of aperiodic chains.
In the following, we will demonstrate the generality of our approach by applying it to two cases of qualitatively different structural character, the Thue-Morse and Rudin-Shapiro chains.
We thereby essentially go through the same analysis steps as in \cref{sec:edgeModes}---identification of LRMs, edge state control, and low contrast behavior---and assess the particularities of each structural case.

\subsection{Singular continuous Fourier spectrum: Thue-Morse chain} \label{sec:thue}

A well-studied case of aperiodic order which is not quasiperiodic is the Thue-Morse sequence \cite{Maciaroleaperiodicorder2006}, produced by the inflation rule $A\rightarrow AB, B\rightarrow BA$ yielding $T = ABBABAABBAAB\cdots$. 
Its Fourier spectrum is singular continuous, and from this viewpoint it is considered \cite{kolarGeneralizedThueMorseChains1991} closer to the disorder limit (with absolutely continuous spectrum \cite{KroonAbsencelocalizationmodel2004}) than quasiperiodic order (with point-like spectrum). 
On the other hand, a subset of eigenstates of the Thue-Morse chain strongly resemble those of periodic chains \cite{riklundThuemorseAperiodicCrystal1987}.
The eigenstates of a $N = 144$-site Thue-Morse chain \footnote{Note that we have not used here a generation of the Thue-Morse sequence (of length \cite{MaciaBarber2008AperiodicStructuresCondensedMatter} $2^g$ for the $g$-th generation), but the same length $N = 144$ as for the Fibonacci chain.This is done for the sake of comparison but also to avoid certain symmetries \cite{noguezPropertiesThueMorseChain2001} of the Thue-Morse sequence generations (spatial mirror symmetry for odd $g$ and spectral mirror symmetry around $E = 0$ for even $g$), in favor of the generality of the analysis.} are shown in \cref{fig:thue144StateMap}.
Indeed, while some bulk states are more strongly localized into subdomains than in the Fibonacci chain for equal contrast $c = 1.5$ (compare to \cref{fig:fibo144StateMap}), others are more extended along the chain.
As we see in \cref{fig:thue144StateMap}, in spite of the quasiband structure being more fragmented, there occur well distinguishable states within gaps which are localized on one of the chain edges.

%%%%%%%%%%%%%%%%%%%%%%%%%%%%%%%%%%%%%%%%%%%%%%%%%%%%%%%%%%%%%%%%%%%%%%%
\begin{figure}[t!]
\centering
\includegraphics[max size={\columnwidth}{0.9\textheight}]{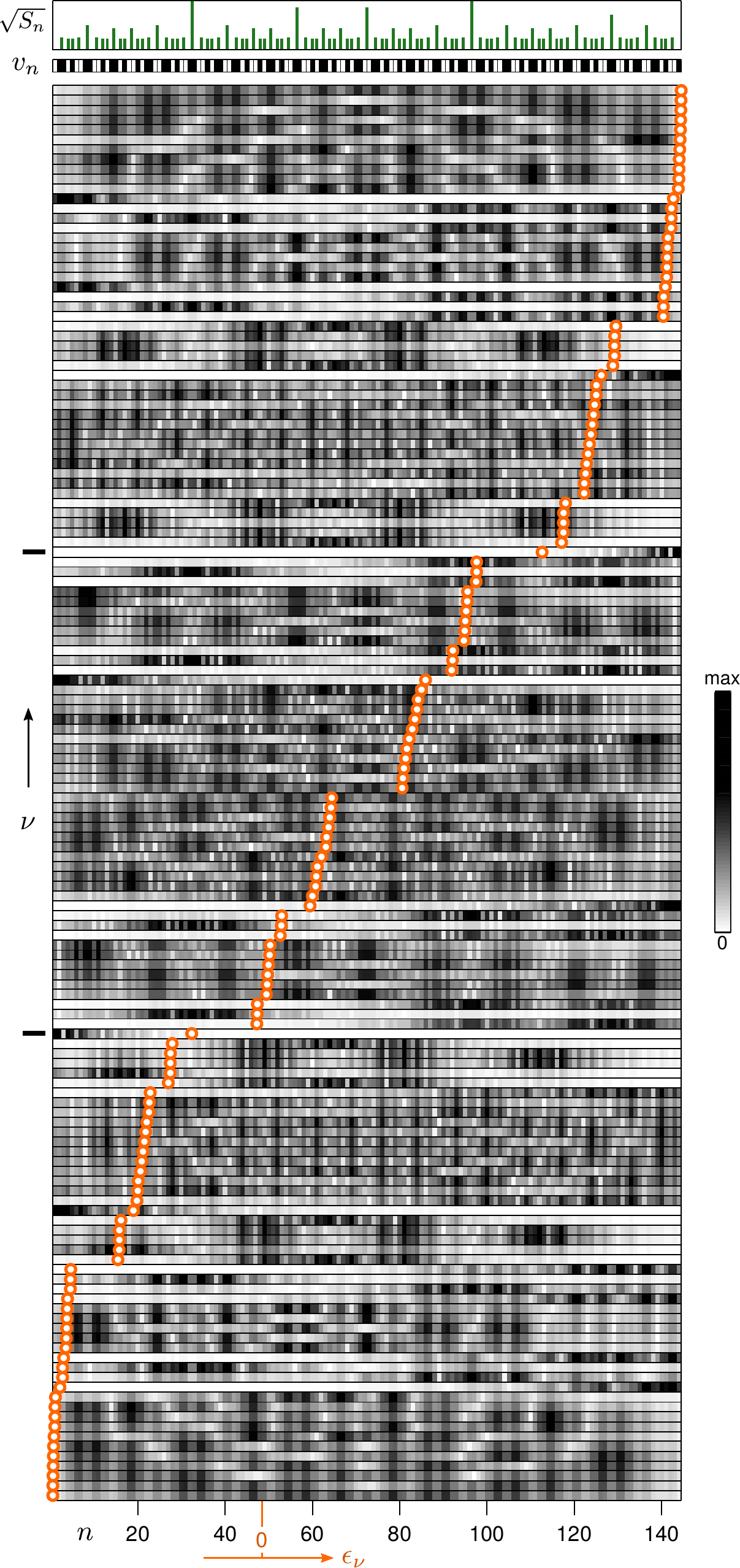}
\caption{
Like in \cref{fig:fibo144StateMap} but for a $N=144$-site Thue-Morse chain.}
\label{fig:thue144StateMap}
\end{figure}
%%%%%%%%%%%%%%%%%%%%%%%%%%%%%%%%%%%%%%%%%%%%%%%%%%%%%%%%%%%%%%%%%%%%%%%

%%%%%%%%%%%%%%%%%%%%%%%%%%%%%%%%%%%%%%%%%%%%%%%%%%%%%%%%%%%%%%%%%%%%%%%
\begin{figure}[ht!]
\centering
\includegraphics[max size={\columnwidth}{0.7\textheight}]{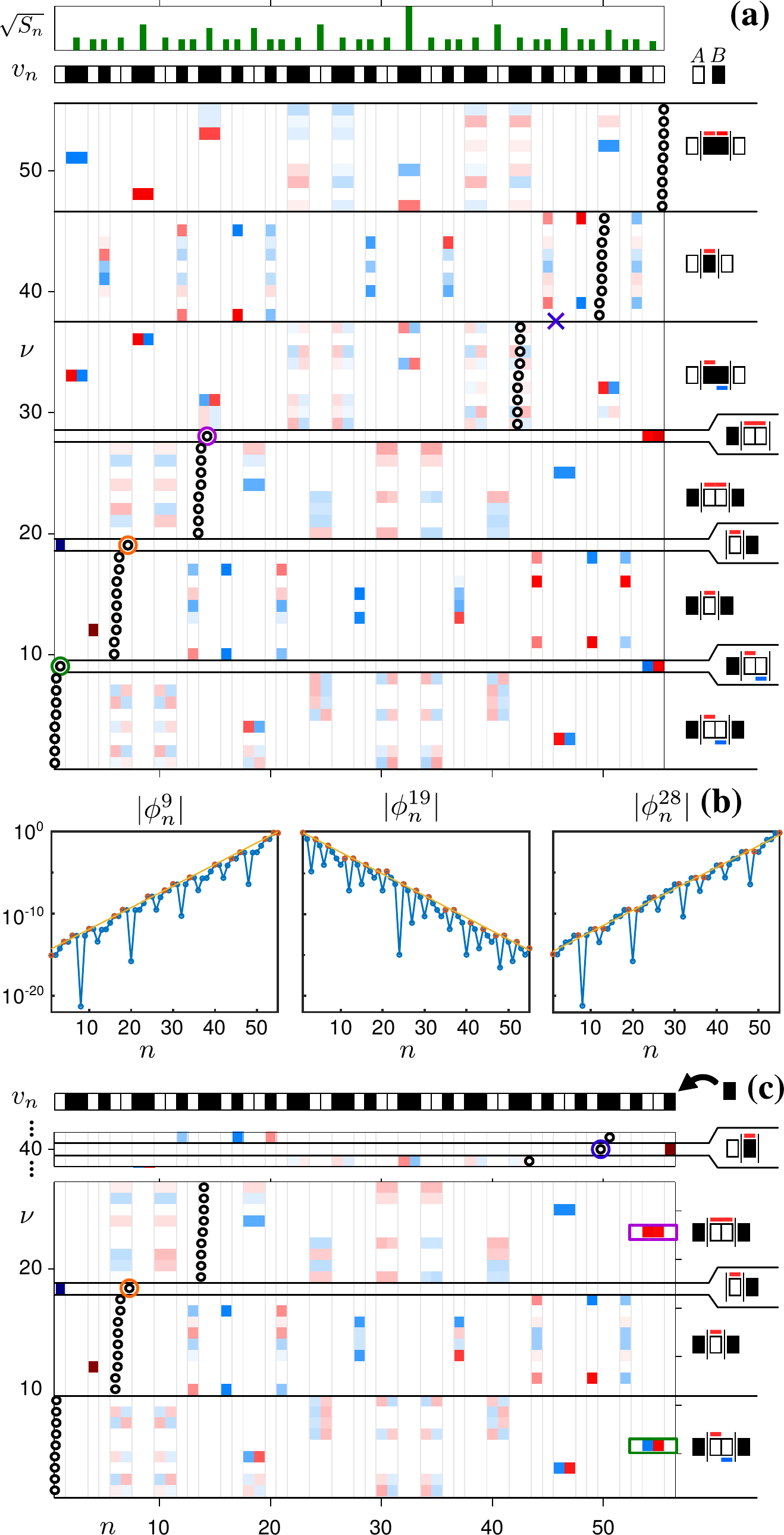}
\caption{\textbf{(a)} Like in \cref{fig:fibo55StateMap}\,(a) but for a $N=55$-site Thue-Morse chain, with three edge states marked by colored circles. 
\textbf{(b)} Absolute values of amplitudes of the three edge states $\phi^{9}$, $\phi^{19}$, $\phi^{28}$ with localization lengths $0.61$, $0.61$ and $0.6$, obtained by fitting a line (orange) to local maxima (orange dots) on a logarithmic scale.
\textbf{(c)} Absorpion of the two right edge states into quasibands (green and purple rectangles) and creation of a new right edge state (blue circle, lying between quasibands as indicated by $\times$ in (a)) by attaching a $B$ site to the right chain end, as explained in the text. Color coding of subfigures (a) and (c) is as in \cref{fig:fibo55StateMap}.
}
\label{fig:thue55StateMap}
\end{figure}
%%%%%%%%%%%%%%%%%%%%%%%%%%%%%%%%%%%%%%%%%%%%%%%%%%%%%%%%%%%%%%%%%%%%%%%

\emph{\stepOne}. 
Like in \cref{sec:edgeModes}, we consider a smaller chain of $N = 55$ sites to visually facilitate the detailed spatial analysis.
Its eigenstates are shown in \cref{fig:thue55StateMap} (a) for contrast $c = 6$, together with the distribution of local reflection symmetries in the chain (top).
As we see, the bulk state profiles are fragmented in a well-defined manner for different quasibands:
Like in the Fibonacci case, each bulk state is composed of copies of a $b$-LRM characterizing the corresponding quasiband, as indicated schematically on the right of the figure. 
In contrast, the three occurring prominent edge states (marked by colored circles) consist of non-repeated $e$-LRMs at one of the chain ends.
Like before, the local resonators underlying the $e$-LRMs can be identified as truncated local resonators underlying the $b$-LRMs.
This demonstrates that our LRM-based framework for the formation of edge states applies also for this class of aperiodic order.
Notably, also here the $b$-LRMs have definite local parity under reflection, and are present in the eigenstates following the local symmetry axes shown in the bar plot (top of \cref{fig:thue55StateMap}\,(a)). 
This is indeed predicted by the perturbation theory of \cref{app:perturbationTheory}.
We thus see that also for the Thue-Morse chain its local symmetries essentially provide the regions of localization of the eigenstate fragments at high contrast.

\emph{\stepTwo}.
The original Thue-Morse chain contained three edge states, which were exponentially localized\footnote{\matlabFootnote{}}\addtocounter{footnote}{-1}\addtocounter{Hfootnote}{-1} as shown in \cref{fig:thue55StateMap} (b).
Edge states in Thue-Morse chains were also demonstrated very recently in terms of the eigenmodes of full vectorial Green matrices \cite{WangEdgemodesscattering2018}, albeit localized according to a power-law.
Now, following the same principle as in the Fibonacci case, \cref{fig:thue55StateMap}\,(c) shows how two edge states (marked by green and orange circles in \cref{fig:thue55StateMap}\,(a)) are annihilated by attaching a $B$ site to the right end of the original chain.
Indeed, those edge states were localized on the truncated resonator $B|AA|$ which is completed to $B|AA|B$ and can thus host the $b$-LRMs $B|\overline{A}\overline{A}|B$ and $B|\overline{A}\underline{A}|B$, so that the edge states are ``absorbed'' into the corresponding quasibands.
However, the right edge of the modified chain now features a new edge state (marked by blue circle) with resonator mode $A|\overline{B}$ (its previous absence is indicated by a $\times$ in \cref{fig:thue55StateMap}\,(a)).
It lies, energetically, in the gap just below the quasiband with $b$-LRM $A|\overline{B}|A$. 
Note that, as expected from our real-space local resonator picture, the left edge state (orange circle) remains unaffected by the present modification on the right edge of the chain, since it is localized on the opposite edge.

\emph{\stepThree}. 
Finally, we investigate how edge states and quasibands behave for lower contrast in the Thue-Morse chain.
\Cref{fig:thue55ContrastVariation}\,(a) shows the spectrum of the chain of \cref{fig:thue55StateMap} (a) for varying contrast, starting from $c = 3$.
As we see, the three edge modes in the spectral gaps are clearly visible also at lower contrast levels.
The spectrum of the modified chain (with right-attached $B$ site) for varying $c$ is shown in \cref{fig:thue55ContrastVariation} (b).
As is highlighted by the black circle and the ellipse, the two former edge states are absorbed into the neighboring quasibands (as shown in \cref{fig:thue55StateMap}\,(a)) for all considered contrast levels. 
Also, the left edge state as well as the modification-induced right edge state [orange and blue in \cref{fig:thue55StateMap}\,(c), respectively] remain in their gaps as the contrast is varied.
Overall therefore, the impact of the modifications persists at lower contrast levels.

%%%%%%%%%%%%%%%%%%%%%%%%%%%%%%%%%%%%%%%%%%%%%%%%%%%%%%%%%%%%%%%%%%%%%%%
\begin{figure}[t]
\centering
\includegraphics[max size={.99\columnwidth}{0.6\textheight}]{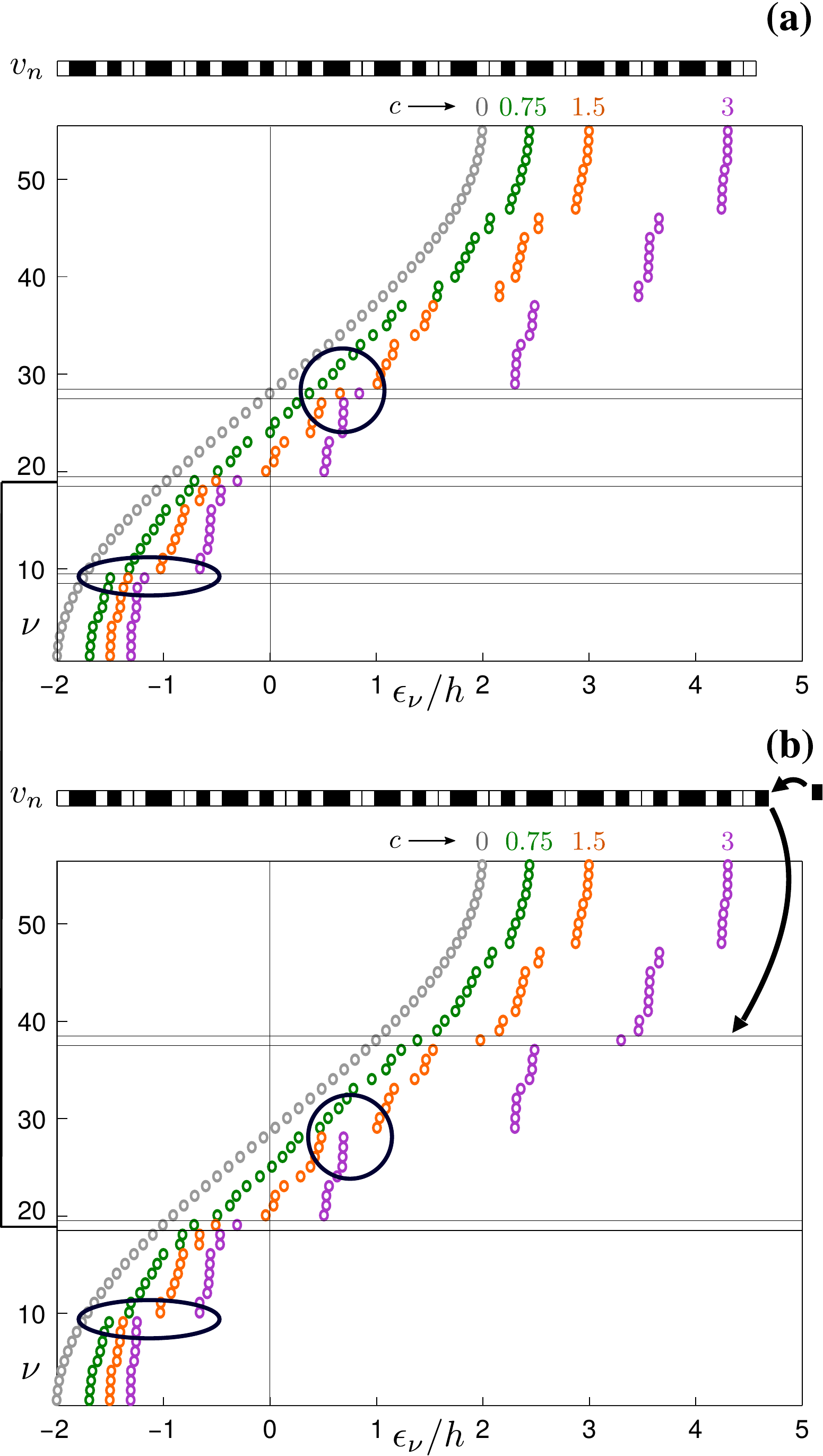}
\caption{
Eigenvalue spectrum for different values of the contrast $c$ 
\textbf{(a)} for the original Thue-Morse chain of \cref{fig:thue55StateMap}\,(a), with gap modes indicated by horizontal stripes, and \textbf{(b)} for the modified chain of \cref{fig:thue55StateMap}\,(c). The black circle and the ellipse highlight the removal of selected gap modes by the modification, for all contrast levels.}
\label{fig:thue55ContrastVariation}
\end{figure}
%%%%%%%%%%%%%%%%%%%%%%%%%%%%%%%%%%%%%%%%%%%%%%%%%%%%%%%%%%%%%%%%%%%%%%%

\subsection{Absolutely continuous Fourier spectrum: Rudin-Shapiro chain} \label{sec:rudin}

%%%%%%%%%%%%%%%%%%%%%%%%%%%%%%%%%%%%%%%%%%%%%%%%%%%%%%%%%%%%%%%%%%%%%%%
\begin{figure}[ht!]
\centering
\includegraphics[max size={\columnwidth}{0.85\textheight}]{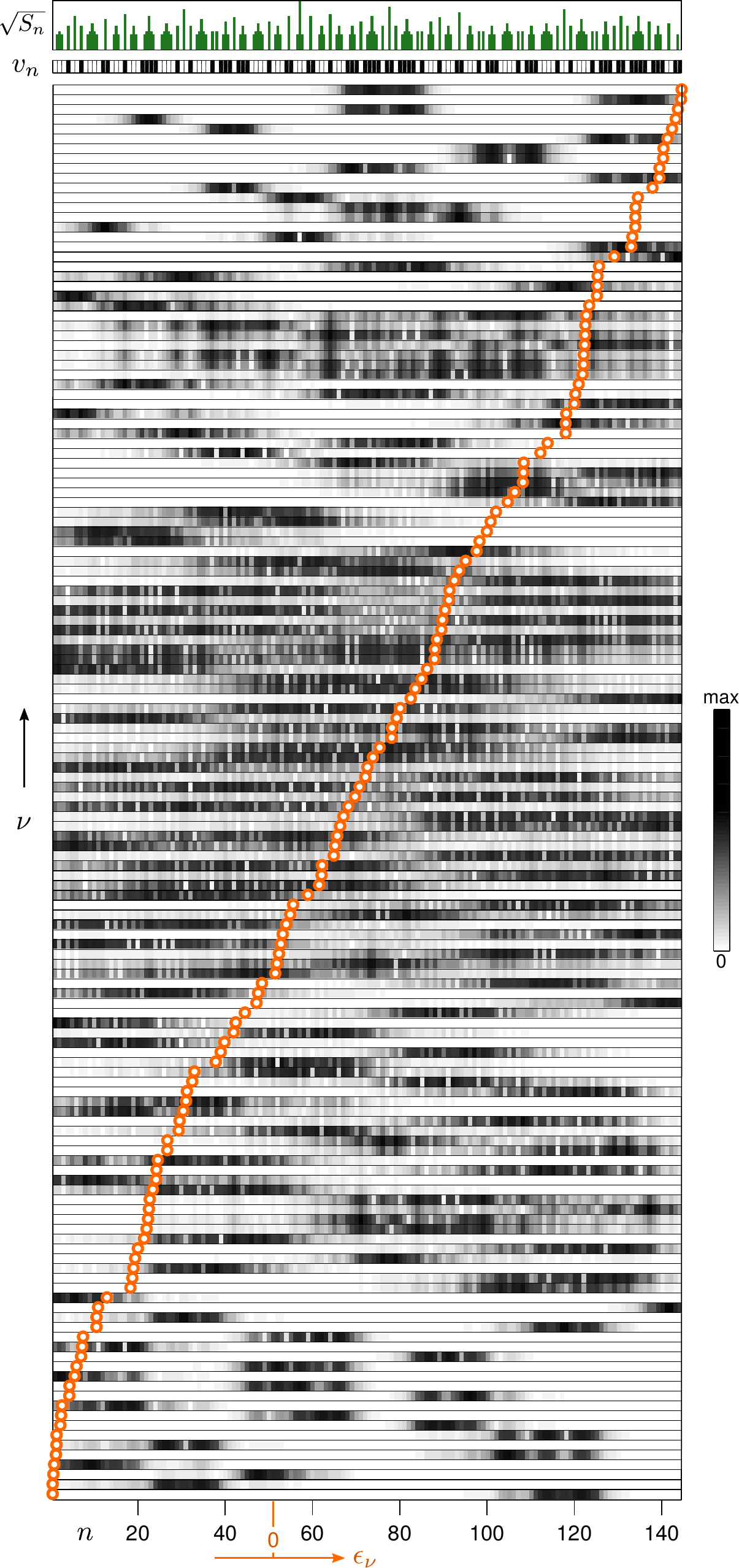}
\caption{
Like in \cref{fig:fibo144StateMap} but for a $N=144$-site Rudin-Shapiro chain.}
\label{fig:Rudin144StateMap}
\end{figure}
%%%%%%%%%%%%%%%%%%%%%%%%%%%%%%%%%%%%%%%%%%%%%%%%%%%%%%%%%%%%%%%%%%%%%%%

%%%%%%%%%%%%%%%%%%%%%%%%%%%%%%%%%%%%%%%%%%%%%%%%%%%%%%%%%%%%%%%%%%%%%%%
\begin{figure}[h!]
	\centering
	\includegraphics[max size={\columnwidth}{0.7\textheight}]{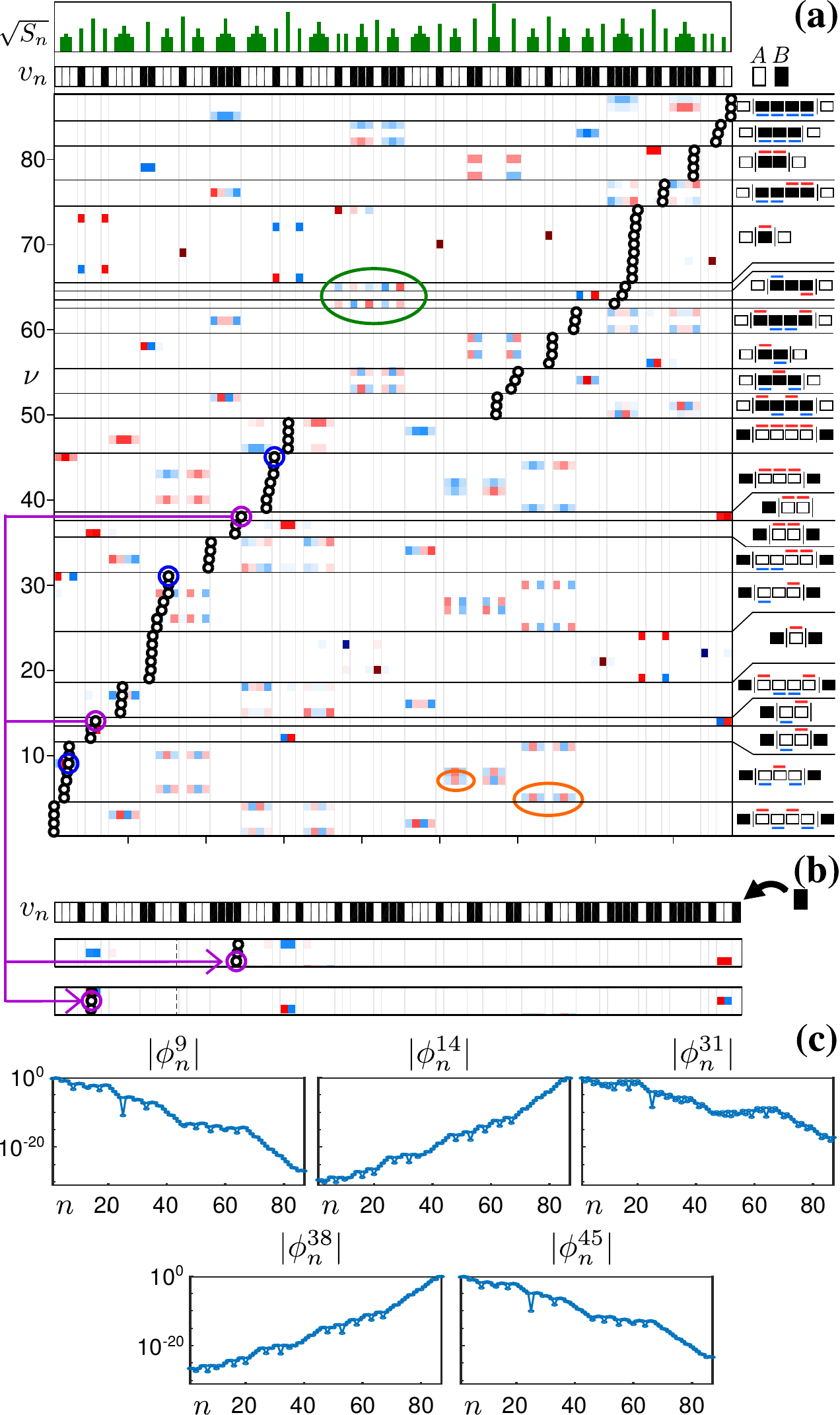}
	\caption{\textbf{(a)} $N=87$ site binary chain corresponding to a truncated Rudin-Shapiro sequence at contrast $c = |v/h| = 6$ (hopping $h=0.1$). To the right, the grouping of eigenstates into resonator modes as explained in the text is shown. To simplify the figure, resonator modes are only shown explicitly if they are shared by at least two states. The two states marked by a green ellipse localize on non-locally symmetric structures. The two states marked by orange ellipses are examples for states with different resonator modes but nearly equal energy, as explained in the text. \textbf{(b)}
	The result of an extension of the chain by adding a $B$ to the right. Due to this modification, the resonator $A|BB|$ on the right edge is completed, and the purple marked states in (a) are energetically shifted towards the corresponding states localizing on the $A|BB|A$ resonator located near the right edge. Color coding of subfigures (a) and (b) is as in \cref{fig:fibo55StateMap}. \textbf{(c)} Absolute values of amplitudes of the edge states $\phi^{9}$, $\phi^{14}$, $\phi^{31}$, $\phi^{38}$ and $\phi^{45}$.}
	\label{fig:Rudin55StateMap}
\end{figure}
%%%%%%%%%%%%%%%%%%%%%%%%%%%%%%%%%%%%%%%%%%%%%%%%%%%%%%%%%%%%%%%%%%%%%%%

Taking a step towards higher structural complexity, we finally investigate the case of a Rudin-Shapiro chain in terms of our local resonator framework.
The Rudin-Shapiro sequence \cite{KroonLocalizationdelocalizationaperiodicsystems2002} $R$ is obtained by the inflation rule
$AA\rightarrow AAAB$, $AB\rightarrow AABA$, $BA\rightarrow BBAB$, $BB\rightarrow BBBA$, yielding $R = AAABAABAAAABBBAB\cdots$ for an initial seed $AA$.
Its Fourier spectrum is absolutely continuous, a property shared with completely disordered chains \cite{KroonAbsencelocalizationmodel2004}.
Further, there are indications that the tight-binding Rudin-Shapiro chain has both exponentially and weaker-than-exponentially localized eigenstates \cite{MaciaNatureElectronicWave2014,DuleaUnusualscalingspectrum1993,DuleaTracemapinvariantzeroenergy1992,duleaLocalizationElectronsElectromagnetic1992}, while even extended ones have been shown to exist at low contrast.
The different character of the Rudin-Shapiro states compared to the Fibonacci or Thue-Morse chain can be anticipated from the eigenstate map shown in \cref{fig:Rudin144StateMap}.
As we see, at this low contrast ($c = 1.5$) there is now no clear distinction between bulk and edge states. 
Moreover, no clear energetic clustering into well-defined quasibands is present. 
Note also that the distribution of local reflection symmetries along the chain (see top of figure) is much less structured than in the Fibonacci or Thue-Morse chains (cf. top of \cref{fig:fibo144StateMap,fig:thue144StateMap}), with overall smaller symmetry domains present.
At the same time, there is clustering of symmetry axes with gaps in between, caused by the occurrence of larger contiguous blocks of single type (up to four $A$ or $B$ sites in a row) along the sequence.
In the following we show that there is still a strong link of the eigenstates and spectral features of the Rudin-Shapiro to the presence of locally symmetric resonators.

\emph{\stepOne}.
For the high-contrast analysis, we consider a Rudin-Shapiro chain of $N = 87$ sites.
The size is now chosen slightly larger in order to better reflect the structural properties of the Rudin-Shapiro sequence.
Indeed, in accordance with its higher complexity, a given substructure will here repeat at relatively larger distances along the sequence.
It may thus occur only once in a too short chain, thereby obscuring its long-range order.
\Cref{fig:Rudin55StateMap}\,(a) shows the eigenstate map of the considered chain at contrast $c = 6$. 
We see that also here the eigenstates fragment onto locally symmetric substructures, and are again composed of $b$-LRMs corresponding to clustered eigenvalue quasibands, as shown on the right. 
The difference is now that there are many more different identified $b$-LRMs compared to the Fibonacci and Thue-Morse chains.
This is because the increased number of contiguous block sizes allows for a \emph{higher diversity of local resonator substructures}, with larger resonators additionally hosting a larger number of different LRMs each.
In turn, there is a higher possibility that different $b$-LRMs have (nearly) the same energy, since the different resonators may have partially overlapping individual eigenspectra. 
Therefore, it may now more easily occur that \emph{different} LRMs participate in the \emph{same} eigenstate (to which they are quasidegenerate; see \cref{appendix:submatrixTheorem}).
An example of this are the states indicated by the green ellipse in \cref{fig:Rudin55StateMap}\,(a):
Each of them consists of a $A|\overline{B}|A$ on the left and two $A|\overline{B}\underline{B}\overline{B}|A$ on the right, consecutively overlapping by one $A$ site.
The emergence of such modes is explained in detail by means of perturbation theory in \Cref{app:perturbationTheory}.
Further, edge states appear which localize on corresponding $e$-LRMs.
Those are now, however, energetically not as clearly distinct from the clustered eigenvalues of quasibands as in the Fibonacci and Thue-Morse cases.
For example, the states marked by blue circles are localized on the left edge, but are composed of the $e$-LRMs (from top to bottom) $|\overline{AAA}|B$, $|\overline{A}\underline{A}\overline{A}|B$, $|\overline{A}A\underline{A}|B$, which are nearly degenerate to the $b$-LRMs of the corresponding quasibands (see right side of figure).
Nevertheless, there are also well-distinguished edge states lying in gaps (though close to gap edges) marked by purple circles.

\emph{\stepTwo}.
Contrary to the Fibonacci and Thue-Morse cases, the edge states are not exactly exponentially localized, but have different localization lengths in different sections, as shown\footnotemark in \cref{fig:Rudin55StateMap} (c). The amplitude of state $\phi^{31}$, though overall decaying, even rises again at around $n\approx 20$ and $n \approx 70$.
Unaffected by this different localization behavior compared to the previously treated examples, we now manipulate the two states marked by purple color which are localized on the right edge. 
These localize on the truncated resonator $B|AA|$, and their energy is different from the energy of states localized on the complete resonator $B|AA|B$ which occurs twice in the bulk.
In (b) we add a $B$ on the right edge of the chain, completing this resonator. Due to this completion, the two former gap-edge states move into the respective energy cluster (or quasiband).

\emph{\stepThree}. In \cref{fig:rudin87ContrastVariation} (a) we investigate the eigenvalues of the Rudin-Shapiro chain of \cref{fig:Rudin55StateMap} (a) for varying contrast. Compared to the case of the Fibonacci chain presented in \cref{fig:fibo55ContrastVariation} or the Thue-Morse chain presented in \cref{fig:thue55ContrastVariation}, the energetic clusters form only at high contrast values.
This already indicates that modifications to the chain done at high contrast can not directly be traced to energetic changes at low contrast as was the case for the Fibonacci and Thue-Morse chain. This can also be seen for the two edge states marked by horizontal lines in \cref{fig:rudin87ContrastVariation} (a). At high contrast, these are caused by a truncated resonator $B|AA|$ on the right edge. In \cref{fig:Rudin55StateMap} (b) we have completed this resonator, causing the two edge states to move (at high contrast) closer to the nearest eigenvalue cluster. As can be seen in \cref{fig:rudin87ContrastVariation} (b), this manipulation is only effective at high contrast. For low contrast, the eigenvalue structure is nearly unchanged compared to the original chain shown in \cref{fig:rudin87ContrastVariation} (a).

%%%%%%%%%%%%%%%%%%%%%%%%%%%%%%%%%%%%%%%%%%%%%%%%%%%%%%%%%%%%%%%%%%%%%%%
\begin{figure}[t]
	\centering
	\includegraphics[max size={.99\columnwidth}{0.7\textheight}]{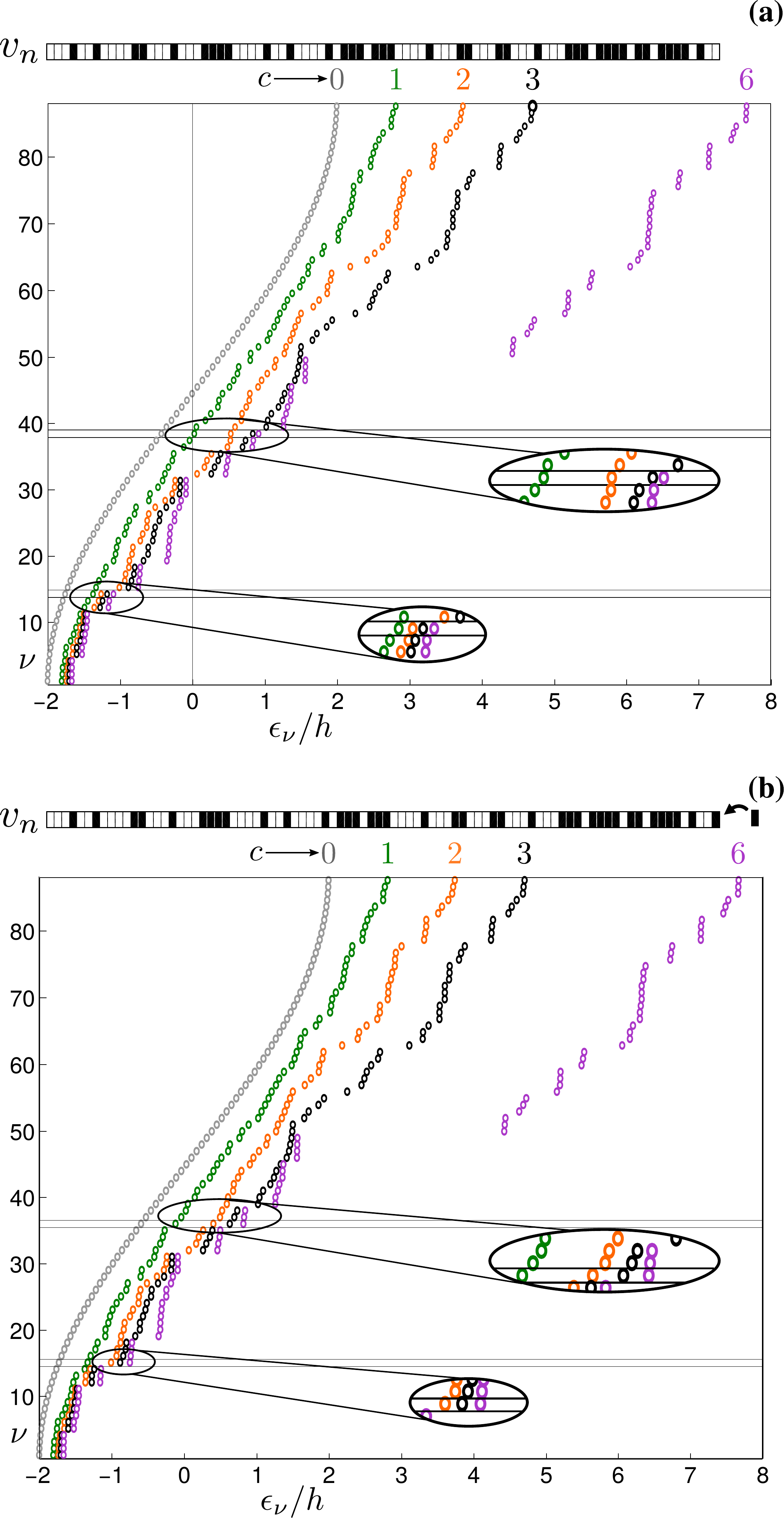}
	\caption{
		\textbf{(a)} Evolution of the eigenvalue spectrum of the Rudin-Shapiro chain shown in \cref{fig:Rudin55StateMap} (a) for various values of the contrast $c$. The two gap states $\phi^{14,38}$ are denoted by horizontal lines. \textbf{(b)} Same as (a), but now for the modified Rudin-Shapiro chain shown in \cref{fig:Rudin55StateMap} (b). For high contrast of $c = 6$, the two gap states are removed. However, they reappear, though at slightly different positions, already at a contrast $c = 3$, as shown in the insets.}
	\label{fig:rudin87ContrastVariation}
\end{figure}
%%%%%%%%%%%%%%%%%%%%%%%%%%%%%%%%%%%%%%%%%%%%%%%%%%%%%%%%%%%%%%%%%%%%%%%

\vspace{1ex}

In conclusion, we have applied our local-symmetry based resonator strategy to the Thue-Morse and the Rudin-Shapiro chain.
The results show that our approach can be used to explain and control gap-edge states of the Thue-Morse chain.
At high contrast the gap-edge states of the Rudin-Shapiro chain are likewise explained. However, our approach can not be used to make qualitative predictions at low contrast.

\section{Applicability and relation to other approaches} \label{sec:commentsOnGenerality}
Let us briefly comment on the limitations of applicability of the developed framework and its connection to similar approaches in the literature.
The presented methodology essentially relies on the fragmentation of eigenstates at high contrast and can thus only be applied onto chains featuring such a behavior of eigenstates. A perturbation theoretical treatment of binary tight-binding chains which serves as basis for our methodology, see \Cref{app:perturbationTheory}, indicates that at high contrast the fragmentation of eigenstates is indeed the generic case. However, the necessary conditions for this behavior still need to be determined in order to clarify the range of applicability.

The connection between local resonators and quasibands in quasiperiodic setups has been commented on in Ref. \onlinecite{ZijlstraExistencelocalizationsurface1999a,Macia2017PSSb2541700078ClusteringResonanceEffectsElectronic,BandresTopologicalPhotonicQuasicrystals2016}. For the Thue-Morse sequence, a similar analysis has been achieved in Ref. \onlinecite{riklundThuemorseAperiodicCrystal1987}.
However, to the best of our knowledge, there is no systematic framework bringing together the three concepts of LRMs, quasibands and edge states into a unified context.
An approach related to ours is the renormalization group flow analysis. For the tight-binding chains, this method aims at understanding the energetic behavior of a chain through a series of size reductions\cite{niuSpectralSplittingWavefunction1990}. At each step, the size of the system is decreased, and the behavior of the decreased one is linked to the bigger one by a renormalization procedure, usually done in terms of perturbation theory. The renormalization group flow is a powerful method, and has been successfully used to explain the fractal nature of the Fibonacci spectrum \cite{kaluginElectronSpectrumOnedimensional1986,piechonAnalyticalResultsScaling1995,liuBranchingRulesEnergy1991,niuSpectralSplittingWavefunction1990,niuRenormalizationGroupStudyOneDimensional1986}.
However, it needs to be tailored to the system of interest, and as stated in Ref. \onlinecite{niuRenormalizationGroupStudyOneDimensional1986}, finding an appropriate renormalization group flow for a general quasiperiodic chain is not easy. This stands in contrast to the very general method proposed in this work, which was shown to be applicable to a broad range of different setups.

\section{Conclusions and outlook} \label{sec:conclusions}

We have presented a systematic approach to the analysis of aperiodic binary tight-binding chains regarded as a combination of different resonator-like subsystems rather than a single bulk unit. For low inter-site coupling, each eigenstate is seen to be composed of spatially non-overlapping local resonator modes of these resonator structures.
This viewpoint, supported by a rigorous perturbation theoretical treatment, allows for an intuitive explanation of the emergence of both quasibands and gap-edge states in such chains.
We demonstrate the power of our approach by applying it to Fibonacci, Thue-Morse and Rudin-Shapiro chains and show how gap-edge states occurring in these chains can be manipulated.

A repeating motif in our analysis of eigenstates at high contrast is the fact that most resonator modes share the local symmetries of the underlying systems. This strong impact of local symmetries is remarkable, especially as it is hidden at lower contrast levels by a substantial background in the eigenstate profiles. In this work we have given an explanation for this finding at high contrast, and we believe that the study of local symmetries in complex setups is a very promising field with rich perspectives and potential applications.
The recently established framework of local symmetries \cite{Kalozoumis2015AP362684InvariantCurrentsScatteringLocally,Kalozoumis2014PRL11350403InvariantsBrokenDiscreteSymmetries,Kalozoumis2013PRA8833857LocalSymmetriesPerfectTransmission,Morfonios2017AP385623NonlocalDiscreteContinuityInvariant,Morfonios2017AP385623NonlocalDiscreteContinuityInvariant,Rontgen2017AP380135NonlocalCurrentsStructureEigenstates,Zampetakis2016JPA49195304InvariantCurrentApproachWave} provides dedicated tools for this purpose, and extensions of it are of immediate relevance.
In this line, our work may enable the local-symmetry assisted design of novel optical devices that support desired quasiband structures and strongly localized edge states at prescribed energies, offering exciting opportunities to control light-matter coupling in complex aperiodic environments.

\section{Acknowledgments}

M.R. gratefully acknowledges financial support by the `Stiftung der deutschen Wirtschaft' in the framework of a scholarship. L.D.N. gratefully acknowledges the sponsorship of the Army Research Laboratory under
Cooperative Agreement Number W911NF-12-2-0023. The views and conclusions contained in
this document are those of the authors and should not be interpreted as representing the official
policies, either expressed or implied, of the Army Research Laboratory or the U.S. Government.
The U.S. Government is authorized to reproduce and distribute reprints for Government purposes
notwithstanding any copyright notation herein.

\appendix

\section{Discrete resonators} \label{appendix:discreteResonators}

%%%%%%%%%%%%%%%%%%%%%%%%%%%%%%%%%%%%%%%%%%%%%%%%%%%%%%%%%%%%%%%%%%%%%%
\begin{figure}[t]
	\centering
	\includegraphics[max size={1\columnwidth}{0.8\textheight}]{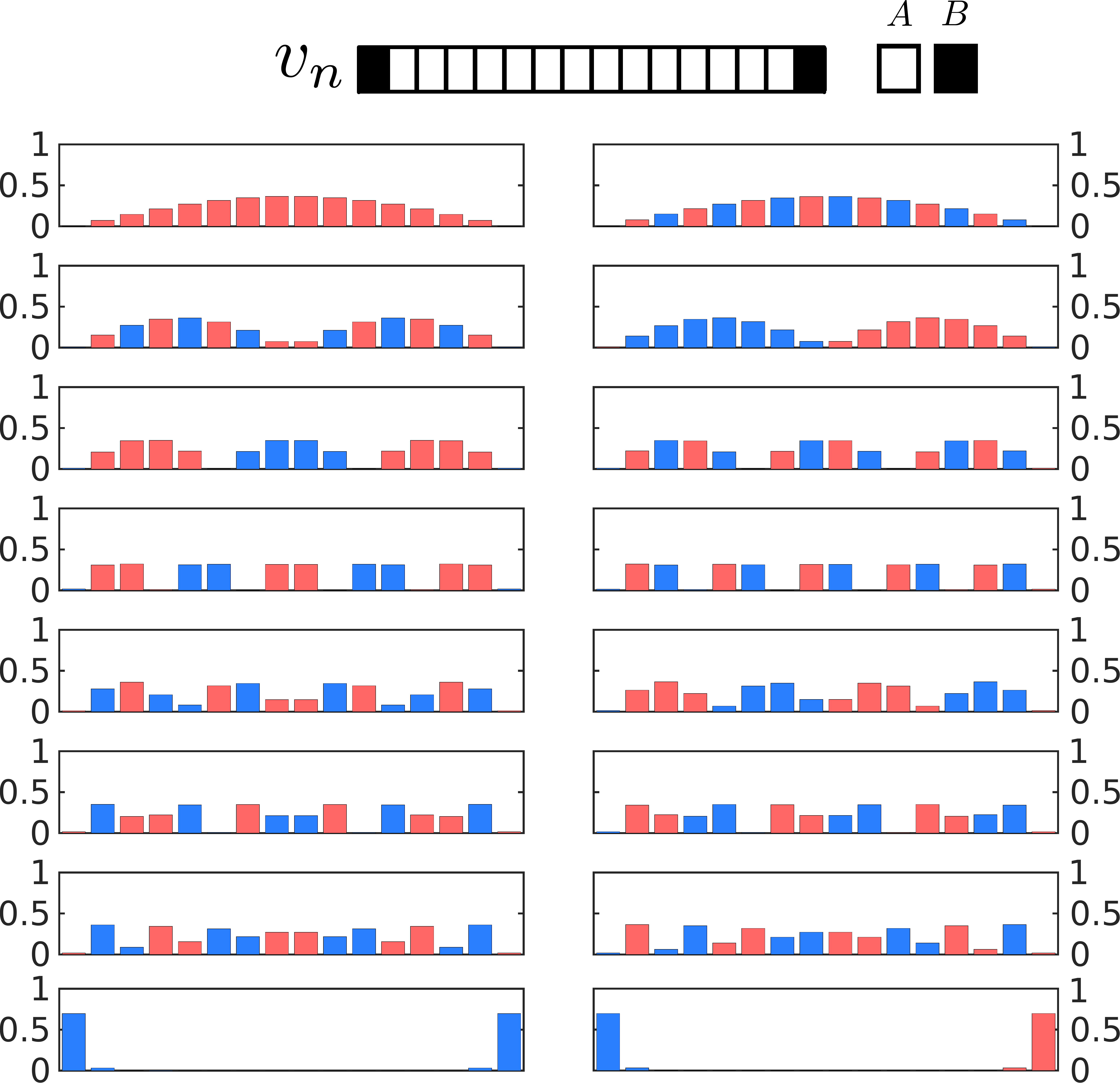}
	\caption{All $16$ eigenstates of the chain (depicted above) $BA\ldots{}B$ with $14$ $A$-sites at a contrast of $c = 20$. All but the eigenstates in the last row localize on the $A$-sites.
		}
	\label{fig:resonatorOverview}
\end{figure}
%%%%%%%%%%%%%%%%%%%%%%%%%%%%%%%%%%%%%%%%%%%%%%%%%%%%%%%%%%%%%%%%%%%%%%%

The aim of this appendix is to justify viewing substructures embedded in a larger binary aperiodic lattice as local resonators.
To this end, we investigate the behavior of the simplest case of such a structure, $BAB$, in more detail.
Its Hamiltonian is
\begin{equation} \label{eq:resonatorHamiltonian}
H = \begin{pmatrix}
v_{B} & h & 0 \\
h & v_{A} & h \\
0 & h & v_{B}
\end{pmatrix},
\end{equation}
with the (unnormalized) eigenstates
\begin{equation}
\phi^{1} = \begin{pmatrix}
-1 \\ 0 \\ 1
\end{pmatrix}, \phi^{2,3} = \begin{pmatrix}
1 \\ \frac{-\delta \pm \sqrt{8 + \delta^2}}{2} \\ 1
\end{pmatrix}.
\end{equation}
where $\delta = (v_{A} - v_{B})/h$, with $c = |\delta|$.
For high contrast $c$, $\phi^{3} \approx (1,-\delta,1)^{T}$ localizes on the central site.
The idea now is to view $BAB$ at high contrast as a resonator, where the site $A$ effectively plays the role of a cavity, while the outer sites $B$ play the role of cavity walls.
The resemblance to a resonator becomes clearer for a larger structure with more modes between the resonator walls, like the structure in \cref{fig:resonatorOverview}.
As one can see, all but two eigenstates extend nearly exclusively on the internal $A$ sites, and the wave-like character of these states is well recognizable. Two states exclusively localize on the outer two $B$ sites.
The setup thus acts as an extended cavity consisting of $14$ $A$ sites, with two $B$ sites playing the role of the cavity walls.
The smaller structure $BAB$ is of the same nature, albeit with a cavity of only a single site $A$.
Notationally, we will divide the actual cavity and the cavity walls of a resonator by a vertical line, writing e.g.
$B|A\ldots{}A|B$.
Similarly, we also view the ``inverse'' structure
$A|B\ldots{}B|A$
as a resonator with resonator modes of higher energy, assuming $v_{B} \gg v_{A}$.
Moreover, closely neighboring resonators of the form
\begin{equation} \label{eq:resonatorStructuresDouble}
B|A|B|A|B,\; B|AA|B|AA|B,\ldots
\end{equation}
can be seen as coupled resonators.
To indicate the composite character of such resonators, we omit the inner vertical lines, i.e., $B|ABA|B,\; B|AABAA|B,\ldots{} \;$.

\section{Symmetry argument for the absence of edge states} \label{appendix:symmetricMatrices}
Here we explain the absence of edge states in \cref{fig:fibo55Modified} (c) using the concept of local symmetry. The underlying symmetry concept is very general and not limited to the Fibonacci chain, as we demonstrate in the last paragraph of this appendix.
Let us denote an arbitrary sequence of $A$'s and $B$'s by $X$, its reverse ordered counterpart by $X^{-1}$, and by $Y$ a single site $A$ or $B$. Then
\begin{equation} \label{eq:equitableSymmetry}
\sigma([X]) \subset \sigma([X^{-1}YX])
\end{equation}
where $\sigma$ denotes the eigenvalue spectrum and $[X]$ the tridiagonal Hamiltonian representing $X$.
In words, the eigenvalue spectrum of a resonator $[X]$ is completely contained in that of the reflection-symmetric resonator $[X^{-1}YX]$.
For example, if $X = AB$ and $Y = B$, then $X^{-1} = BA$ and $\sigma([AB]) \subset \sigma([BABAB])$.

To prove the above statement, we note that the Hamiltonian $[X^{-1}YX]$ reads
\begin{equation}
	H = \begin{pmatrix}
	\lbrack X^{-1}\rbrack & C & 0\\
	C^{T} & [Y] & D \\
	0 & D^{T} & \lbrack X \rbrack
	\end{pmatrix}
\end{equation}
where $[X^{-1}],\; [X] \in \mathbb{R}^{m\times{}m}$. The matrices $C = (0,\ldots,0,h)^{T} \in \mathbb{R}^{m\times{}1}$ and $D = (h,0,\ldots,0) \in \mathbb{R}^{1\times{}m}$ connect the central site $[Y]$ to $[X]$ and $[X^{-1}]$, respectively.
Now, using the ``equitable partition theorem'' from Ref. \onlinecite{Rontgen2018PRB9735161CompactLocalizedStatesFlat}, we can transform $H$ by a similarity transform into a block-diagonal form
\begin{equation}
	H' = \begin{pmatrix}
	\lbrack X^{-1} \rbrack & \sqrt{2} C & 0\\
	\sqrt{2} C^{T} & [Y] & 0 \\
	0 & 0 & [X]
	\end{pmatrix}.
\end{equation}
The similarity transform conserves $\sigma$, and since $H'$ is block-diagonal, we have
\begin{equation}
	\sigma(H) = \sigma(H') \Rightarrow \sigma([X]) \subset \sigma(H) = \sigma([X^{-1}YX])
\end{equation}
which proves \cref{eq:equitableSymmetry}. Moreover, again using the equitable partition theorem, one can show that the eigenvalues of $[X]$ belong to eigenstates of $[X^{-1}YX]$ with negative parity with respect to the central site $Y$.

Let us now apply the above statement to \cref{fig:fibo55Modified} (c). Here, for each resonator mode at the edge, there exists one resonator mode within the bulk possessing a similar energy:
\begin{align} \label{eq:antisymmetricResonatorModes}
	\epsilon(|\overline{B}|A) &\approx \epsilon(A|\overline{B}A\underline{B}|A) \\
	\epsilon(A|\underline{B}\overline{B}|) &\approx  \epsilon(A|\underline{B}\overline{B}A\underline{B}\overline{B}|A)  \label{eq:antisymmetricResonatorModes2}\\
	\epsilon(A|\overline{BB}|) &\approx \epsilon(A|\overline{BB}A\underline{BB}|A) \label{eq:antisymmetricResonatorModes3}
\end{align}
where $\epsilon(R)$ denotes the energy of the resonator mode $R$.
In the limit of high contrast, where the resonators present in \cref{eq:antisymmetricResonatorModes,eq:antisymmetricResonatorModes2,eq:antisymmetricResonatorModes3} are disconnected from the remainder of the system, the approximations become equalities, and the edge state eigenenergies are thus ``absorbed'' into the corresponding quasiband.

In a similar manner, the energetic near-equivalence of resonator modes
\begin{align*}
		\epsilon(|\overline{AAA}|B) &\approx \epsilon(B|\overline{AAA}B\underline{AAA}|B) \\
		\epsilon(|\overline{A}\underline{A}\overline{A}|B) &\approx \epsilon(B|\overline{A}\underline{A}\overline{A}B\underline{A}\overline{A}\underline{A}|B) \\
		\epsilon(|\overline{A}A\underline{A}|B) &\approx \epsilon(B|\overline{A}A\underline{A}B\overline{A}A\underline{A}|B)
\end{align*}
at high contrast as occurring in \cref{fig:Rudin55StateMap} (a) can be explained.

\section{Perturbation theoretical treatment} \label{app:perturbationTheory}

In this section, we give an explanation for the fragmentation of eigenstates at high contrast in terms of a perturbation theoretical analysis. This will also show why the dominant entries of the eigenstates are in almost all cases obeying local symmetries.
Before we start, we note that a degenerate perturbation theoretical treatment of binary chains has been done in the past to retrieve its eigenenergies \cite{Barache1994PRB4915004ElectronicSpectraStronglyModulated}. The main focus in the following, however, lies on the behavior of eigenstates.

To apply perturbation theory, we write the Hamiltonian \cref{eq:tridiagonalHamiltonian} as
\begin{equation} \label{eq:definitionOfPerturbation}
	H = H_{0} + \lambda H_{I} \in \mathbb{R}^{N\times{} N},
\end{equation}
where $H_{0}$ solely contains the diagonal part of $H$, i.e., isolated sites, while $H_{I}$ connects them, i.e., contains the off-diagonal elements of $H$.
By means of $\lambda$, an eigenstate $\ket{\phi}^{(i)},\; 1 \le i \le N$ of $H$ as well as its energy $\en{i}{}$ are expanded as
\begin{align} \label{eq:perturbationSeriesStates}
	\pState{i}{} & = \pState{i}{0}  + \lambda \pState{i}{1}  + \lambda^2 \pState{i}{2}  + \ldots \\
	\en{i}{} &= \en{i}{0} + \lambda \en{i}{1} + \lambda^2 \en{i}{2} + \ldots \label{eq:perturbationSeriesEnergies} .
\end{align}
Inserting \cref{eq:perturbationSeriesStates,eq:perturbationSeriesEnergies} into the Schrödinger equation $H \ket{\phi^{(i)}} = \epsilon^{(i)} \ket{\phi^{(i)}}$ yields the perturbation series which is assumed to converge and thus solved order by order in $\lambda$.

At zeroth order, the perturbation series reduces to the eigenvalue equation for the unperturbed $H_{0}$. Since it is binary, the $N$ eigenstates of $H_{0}$ are highly degenerate and form two groups, satisfying
\begin{align*}
H_{0} \psiState{\alpha}{} &= v_{A} \psiState{\alpha}{}, \; 1 \le \alpha \le g_{A} \\
H_{0} \psiState{\beta}{} &= v_{B} \psiState{\beta}{}, \; g_{A} + 1 \le \beta \le g_{A} + g_{B} = N
\end{align*}
where $g_{A,B}$ denote the number of sites with potential $A,B$, respectively. The so-called ``correct'' zeroth-order states which fulfill
\begin{equation} \label{eq:definitionOfCorrectZerothOrderStates}
\pState{g}{0} = \lim\limits_{\lambda \rightarrow 0}{\pState{g}{}}, \; g = \{\alpha,\beta\}
\end{equation}
and which occur in \cref{eq:perturbationSeriesStates} and thus also in the perturbation series are linear superpositions of the $\psiState{g}{}$.
In the following, we will always denote the two sets $\{\alpha,\beta\}$ by $g$ and simple call the $\pState{g}{0}$ the zeroth-order states.

At the start of the perturbation theoretical treatment, the $\psiState{i}{},\; 1 \le i \le N$ are known, but the $\pState{i}{0}$ are usually not, and the $\pState{i}{1,2,\ldots{}}$ can not be directly be determined.
However, it can be shown\cite{Hirschfelder1974JCP601118DegenerateRSPerturbationTheory} that already the knowledge of the $\psiState{i}{}$ is sufficient to obtain a series of particular solutions to the $1,2,\ldots{},n$-th order perturbation equation, yielding the energy-corrections $\en{i}{1},\ldots{},\en{i}{n}$ as a byproduct.
Provided that the degeneracy of a given state $\pState{j}{},\; 1 \le j \le N$ is lifted at $k$-th order, then the corresponding correct-zeroth order state $\pState{j}{0}$ can be obtained by diagonalizing a $\mathbb{R}^{|j'| \times{} |j'|}$ matrix which can be derived from the $(k-1)$-th order perturbation equation\cite{Hirschfelder1974JCP601118DegenerateRSPerturbationTheory}. Here, $|j'|$ is the number of states $\pState{j'}{}$ which are degenerate with $\pState{j}{}$ up to order $k-1$. Then, at order $k+1,\ldots{},k+l$, the state correction $\pState{j}{1},\ldots{},\pState{j}{l}$ can be obtained.
Note that for the problem at hand, all degeneracies \emph{are guaranteed} to be lifted at a finite order, since the eigenvalues of tridiagonal matrices with strictly non-vanishing sub-and superdiagonals (such as the one here) are distinct\cite{Parlett1998SymmetricEigenvalueProblem} (i.e., non-degenerate).
Though all degeneracies will eventually be lifted, the order at which this happens is in general different for different states.
In many textbooks, all degeneracies are resolved already at first order, and the zeroth-order states $\pState{i}{0}$ are the ones that diagonalize the matrix $\braket{\psi^{(i)}|H_{I}|\psi^{(i)}}$ in the corresponding degenerate subspace.
This results in simple expressions for the higher-order corrections for both the states and the energy. For our binary $H_{0}$, however, degeneracies are usually resolved only at very high order, and the process becomes complex. For Fibonacci chains, all degeneracies are resolved at $4$-th order for generation $g=7$, at $5$-th order for $g=8$, at $6$-th order for $g=10$ but only in $8$-th order for $g = 12$.

In the following, we will first show the feasibility of degenerate perturbation theory by means of the Fibonacci chain, showing that for high contrast already the zeroth-order states are sufficient to explain the fragmentation of states. Next, we will show the process of determining the zeroth-order states in the first three orders, allowing for an intuitive picture of the emergence of fragmentation and locally symmetric amplitudes.
We have numerically observed convergence of the perturbation series if the contrast is larger than roughly $5$, depending on the exact chain.

\subsubsection*{Application onto the Fibonacci chain}

%%%%%%%%%%%%%%%%%%%%%%%%%%%%%%%%%%%%%%%%%%%%%%%%%%%%%%%%%%%%%%%%%%%%%%%
\begin{figure}[t]
	\centering
	\includegraphics[max size={.99\columnwidth}{0.7\textheight}]{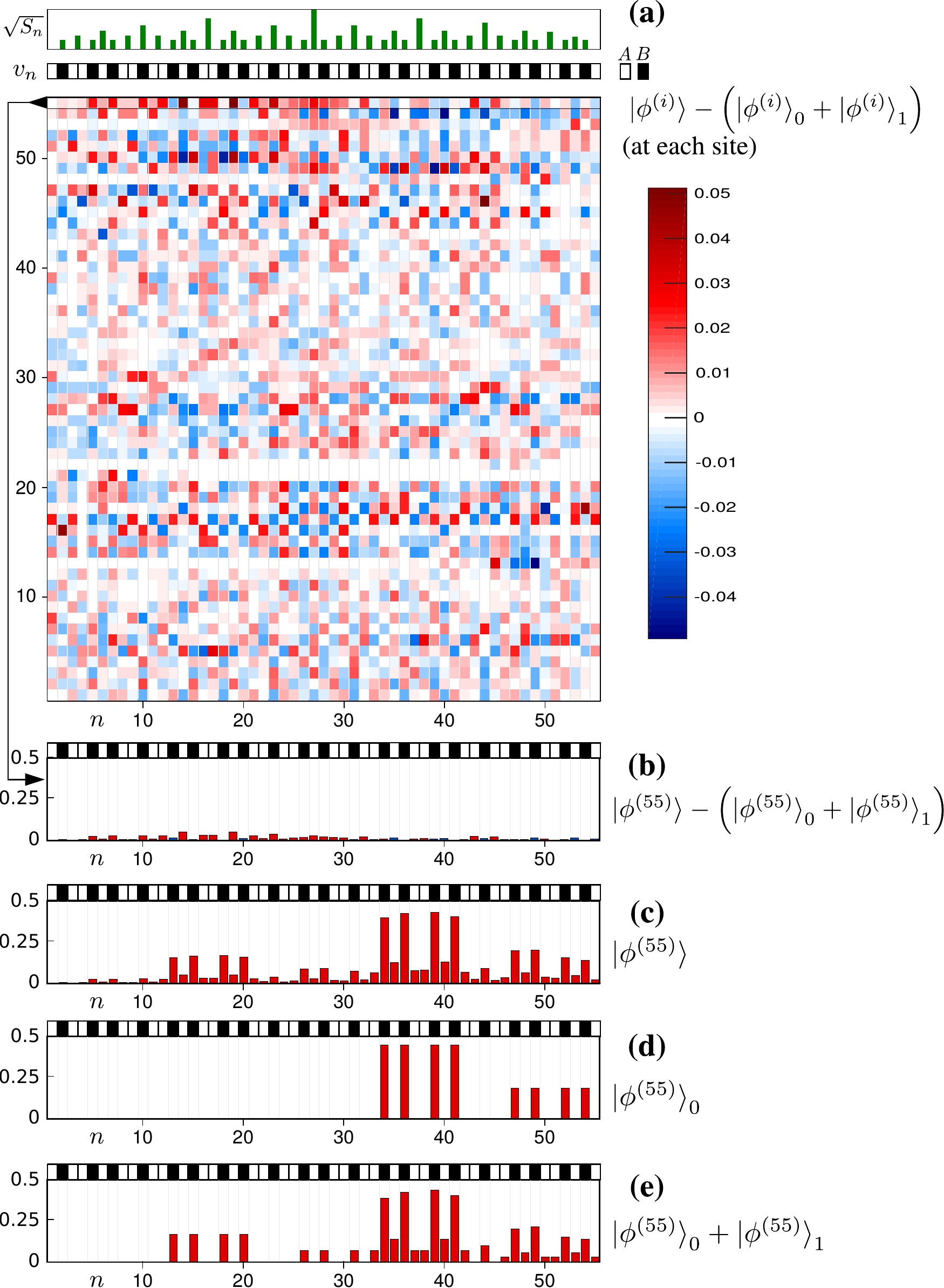}
	\caption{\textbf{(a)} Above: Distribution of axes of local symmetry domains and potential sequence, which is identical to that in \cref{fig:fibo55StateMap} (a), i.e., corresponds to a ninth generation Fibonacci chain. Below: At each site, the map shows the difference between the full eigenstate $\ket{\phi^{(i)}}$ and the sum of the zeroth-order state and the first-order correction at a contrast of $c = 6$. \textbf{(b)} Detailed view on these differences for the uppermost state. The sign of amplitudes is color coded, red for positive and blue for negative values. \textbf{(c)} The amplitudes of the eigenstate $\ket{\phi^{(55)}}$. Note that this particular state does not contain any negative amplitudes. \textbf{(d)} The amplitudes of the zeroth-order state $\ket{\phi^{(55)}}_{0}$. \textbf{(e)} The amplitudes of the zeroth-order state $\ket{\phi^{(55)}}_{0}$ plus that of the first-order state correction $\ket{\phi^{(55)}}_{1}$ (not normalized).}
	\label{fig:fibo55Comparison}
\end{figure}
%%%%%%%%%%%%%%%%%%%%%%%%%%%%%%%%%%%%%%%%%%%%%%%%%%%%%%%%%%%%%%%%%%%%%%%

%%%%%%%%%%%%%%%%%%%%%%%%%%%%%%%%%%%%%%%%%%%%%%%%%%%%%%%%%%%%%%%%%%%%%%%
\begin{figure}[]
	\centering
	\includegraphics[max size={.99\columnwidth}{0.75\textheight}]{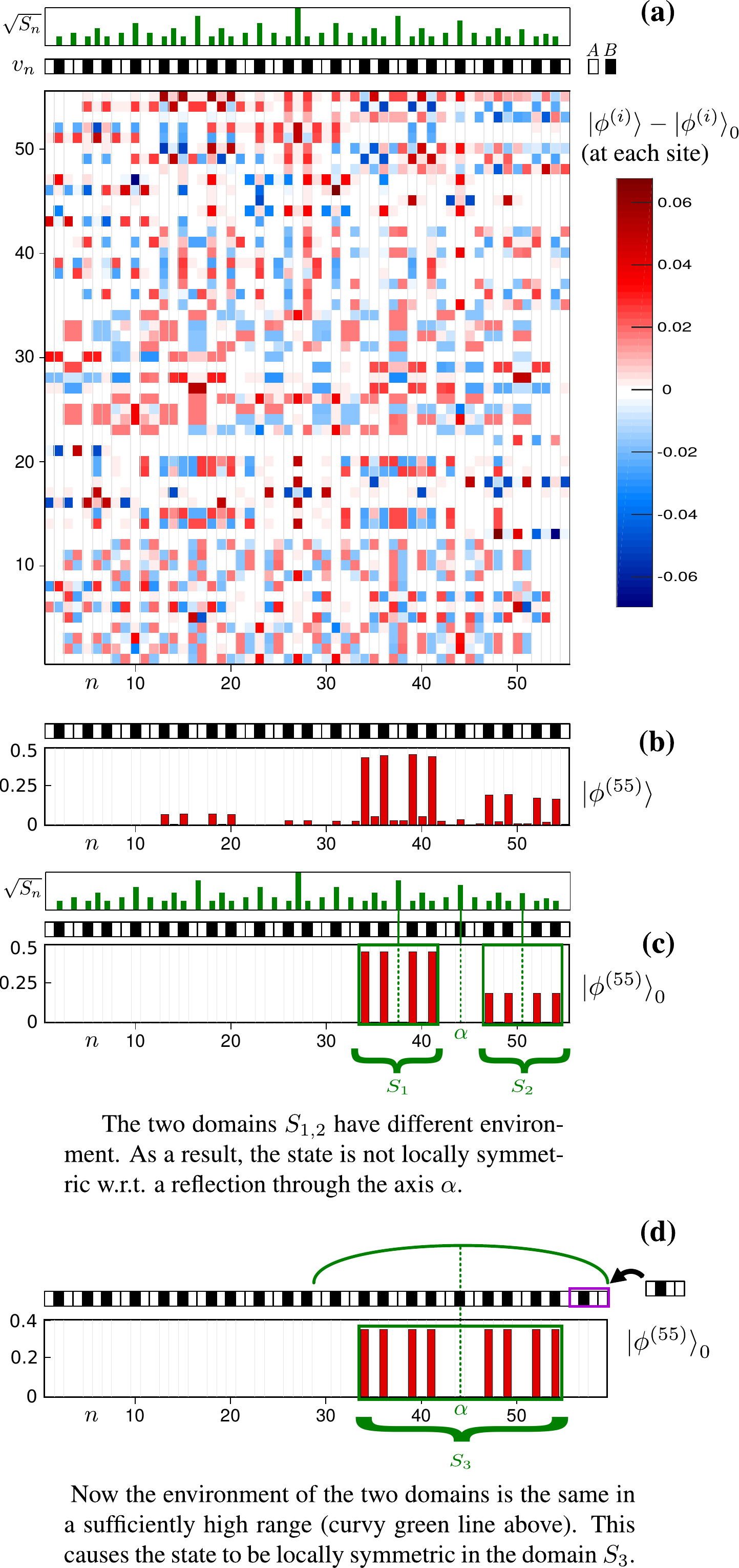}
	\caption{
		\textbf{(a)} Same as in \cref{fig:fibo55Comparison} (a), but now at a contrast $c = 15$ and without the first-order correction $\pState{i}{1}$.
		\textbf{(b)} The uppermost eigenstate $\ket{\phi^{(55)}}$. \textbf{(c)} the zeroth-order state $\ket{\phi^{(55)}}_{0}$. Within $S_{1,2}$ the state is locally symmetric w.r.t. a reflection at the respective centers of these domains (indicated by dotted lines). However, the state is asymmetric w.r.t. a reflection through the axis $\alpha$. \textbf{(d)} 
		The environment of $S_{1,2}$ has been made symmetric by adding the sites $ABAA$ on the right-hand side. As a result, the zeroth-order state $\ket{\phi^{(55)}}_{0}$ (and also, albeit only approximately, the corresponding complete state, though not shown here) is locally symmetric w.r.t. a reflection through $\alpha$.
	}
	\label{fig:fibo55Comparison2}
\end{figure}
%%%%%%%%%%%%%%%%%%%%%%%%%%%%%%%%%%%%%%%%%%%%%%%%%%%%%%%%%%%%%%%%%%%%%%%
\Cref{fig:fibo55Comparison} demonstrates the applicability of degenerate perturbation theory to a $9$-th generation Fibonacci chain [the same as shown in \cref{fig:fibo55StateMap} (a)] at a contrast $c = 6$.
In subfigure (a), at each site the difference
\begin{equation*}
\delta^{(i)} = \pState{i}{} - \left( \pState{i}{0} + \pState{i}{1} \right) ,\; 1 \le i \le N = 55
\end{equation*}
is shown.
Note that the differences $\delta^{(i)}$ are rather small, and in \cref{fig:fibo55Comparison} (b), a detailed picture is given for the uppermost state $\ket{\phi^{(55)}}$. In \cref{fig:fibo55Comparison} (c) and (d), the full state $\pState{55}{}$ and $\pState{55}{0} + \pState{55}{1}$ are shown, respectively. As one can see, already the zeroth-order state matches the fragmentation behavior of the full state quite well, up to the two double resonator modes $A|\overline{B}A\overline{B}|A$ on the left half of the chain. In \cref{fig:fibo55Comparison} (e), we include the first-order correction $\ket{\phi^{(55)}}_{1}$. As one can see, the resulting state $\pState{55}{0} + \pState{55}{1}$
is very close to the full state $\ket{\phi^{(55)}}$ shown in \cref{fig:fibo55Comparison} (c).
Although we have here only shown the $55$-th state (i.e., uppermost) state in detail, the behavior for all other states is similar. This shows that already the first-order state corrections yield very good results.

If one goes to even higher contrast, already the zeroth-order states $\ket{\phi^{(i)}}_{0}$ are sufficient to get a full picture of the fragmentation of a given state. This is demonstrated in \cref{fig:fibo55Comparison2} for a comparatively very high contrast of $c = 20$. Subfigure (a) shows the difference $\ket{\phi^{(i)}} - \ket{\phi^{(i)}}_{0}$
at each site.
The subfigures (b) and (c) show the complete state $\ket{\phi^{(55)}}$ and the zeroth-order state $\ket{\phi^{(55)}}_{0}$, for which the main features (the resonator modes) are visible very well.
Again, this behavior is the same for all other states, indicating that already the zeroth-order states give a good representation of the localization patterns occurring in the full state.
Before we explicitly show the computations for the first three orders in degenerate perturbation theory, let us comment on the connection between the symmetry of the underlying potential sequence and that of the non-negligible amplitudes of a given eigenstate by means of \cref{fig:fibo55Comparison2} (c).
As can be seen, the zeroth-order state is locally parity symmetric individually within the two domains $S_{1,2}$. However, as a whole this state $\ket{\phi^{(55)}}_{0}$ is not locally reflection symmetric w.r.t. an axis denoted by $\alpha$.
As we will outline in the following, the reason for this is that the \emph{environment} of the two domains $S_{1,2}$ is different, where environment includes not only next-neighboring sites but also the ones located further away (we will explain the notion of ``further away'' in more detail below).
In \cref{fig:fibo55Comparison2} (d), we change the environment of the right domain such that it matches that of the first domain up to the first five neighbors. As a result, the zeroth-order state is now symmetric w.r.t. a reflection through the axis $\alpha$.
In the following, we will investigate the connection between local symmetries of the underlying chain and that of the zeroth-order states in more detail. Finally, we will investigate the first-order state corrections and their relation to the fragmentation of eigenstates.

\subsubsection*{Emergence of localization patterns and their locally symmetric character}
%%%%%%%%%%%%%%%%%%%%%%%%%%%%%%%%%%%%%%%%%%%%%%%%%%%%%%%%%%%%%%%%%%%%%%%
\begin{figure*}[t]
	\centering
	\includegraphics[max size={2\columnwidth}{0.9\textheight}]{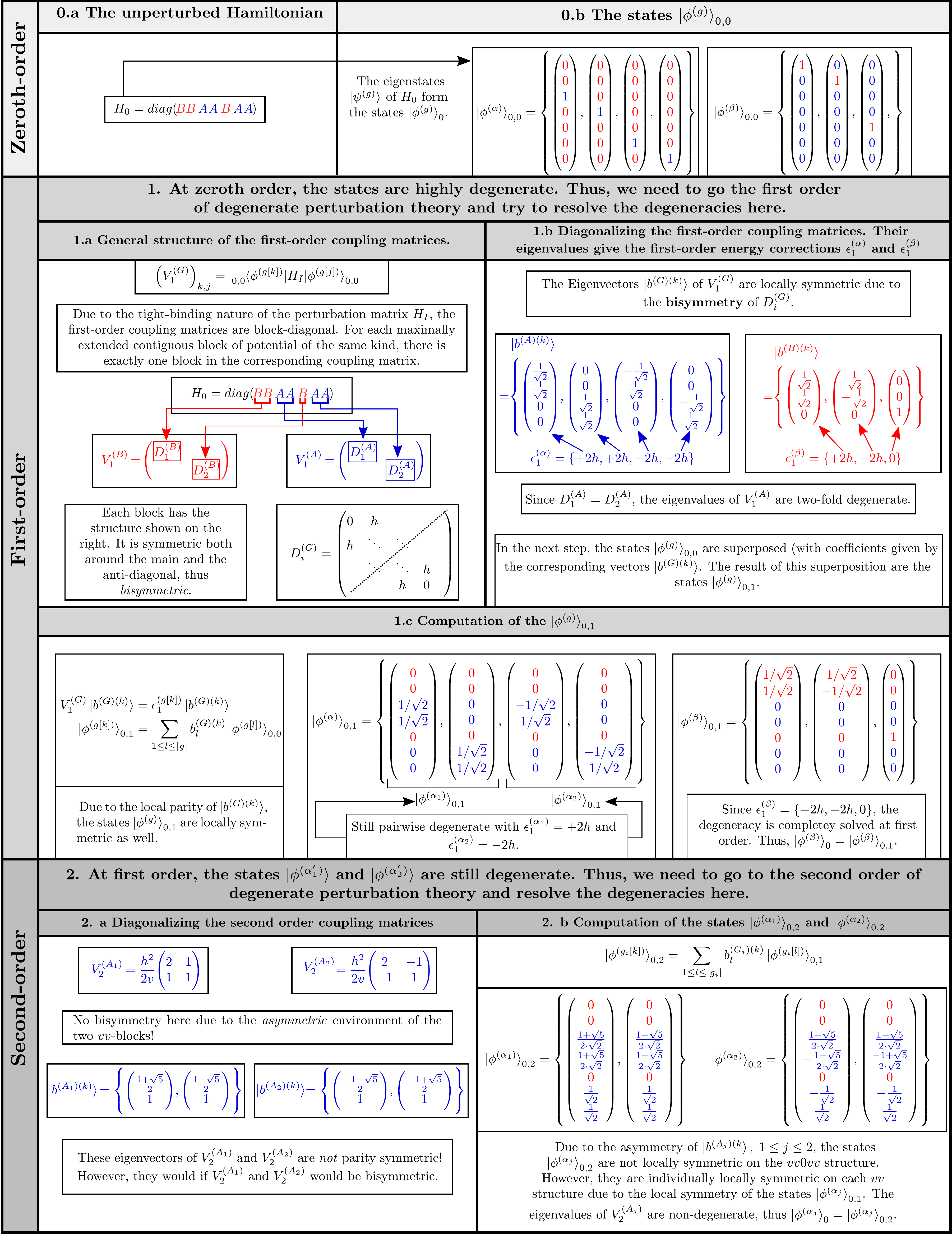}
	\caption{Visualization of the process of finding the zeroth-order states for $H_{0}=diag(B,B,A,A,B,A,A)$.}
	\label{fig:perturbationTheory}
\end{figure*}
%%%%%%%%%%%%%%%%%%%%%%%%%%%%%%%%%%%%%%%%%%%%%%%%%%%%%%%%%%%%%%%%%%%%%%%
We will now show the procedure of finding the zeroth-order states, as can be found e.g. in Refs. \onlinecite{Hirschfelder1974JCP601118DegenerateRSPerturbationTheory,Silverstone1971JCP542325ExplicitSolutionWavefunctionEnergy}.
Since this procedure is quite technical, to help the reader we have visualized the process in a concise form in \cref{fig:perturbationTheory} for the easily traceable case of $H_{0}=diag(B,B,A,A,B,A,A)$.

As stated above, $\pState{g}{0}$ can in general not be determined before its degeneracy is not completely lifted.
At higher orders, the states $\pState{g}{}$ degenerate at zeroth order may split into subsets $\pState{g_{1}}{},\pState{g_{2}}{},\ldots{}$ which are degenerate up to first order, each of which can subsequently split into subsets of states $\pState{g_{1,1}}{},\pState{g_{1,2}}{},\ldots{}$, $\pState{g_{2,1}}{},\pState{g_{2,2}}{},\ldots{}$ which are degenerate up to second order, and so on.
The determination of the zeroth-order states can be done by means of recursively defined auxiliary states\cite{Hirschfelder1974JCP601118DegenerateRSPerturbationTheory,Silverstone1971JCP542325ExplicitSolutionWavefunctionEnergy}
\begin{align}
	\pState{g}{0,0} &= \psiState{g}{}\\
	\pState{g[k]}{0,1} &= \sum\limits_{1 \le l \le |g|} b_{l}^{(G)(k)} \pState{g[l]}{0,0} \label{eq:firstSuperposition}\\
	\pState{g_{i}[k]}{0,2} &= \sum\limits_{1 \le l \le |g_{i}|} b_{l}^{(G_{i})(k)} \pState{g_{i}[l]}{0,1} \label{eq:secondSuperposition}\\
	\pState{g_{i,j}[k]}{0,3} &= \sum\limits_{1 \le l \le |g_{i,j}|} b_{l}^{(G_{i,j})(k)} \pState{g_{i,j}[l]}{0,2} \label{eq:thirdSuperposition}\\
	&\;\;\;\vdots{}
\end{align}
appearing on the left-hand side of the above equations, where $g[k]$ denotes the $k$-th element of the set $g$ and $k$ can run from $1$ to the number of elements $|g|$ within the set. The index $G$ is equal to $A$ if $g \in \alpha$ and equal to $B$ if $g \in \beta$.
Each expansion coefficient $b_{l}^{(S)(k)}$ is the $l$-th component of the vector $\ket{b^{(S)(k)}}, S \in \{G,G_{i},G_{i,j},\ldots{}\}$ defined as
\begin{align*}
	V_{1}^{(G)} \ket{b^{(G)(k)}} &= \en{g[k]}{1} \ket{b^{(G)(k)}} \\
	V_{2}^{(G_{i})} \ket{b^{(G_{i})(k)}} &= \en{g_{i}[k]}{2} \ket{b^{(G_{i})(k)}} \\
	V_{3}^{(G_{i,j})} \ket{b^{(G_{i,j})(k)}} &= \en{g_{i,j}[k]}{3} \ket{b^{(G_{i,j})(k)}} \\
	&\vdots{}
\end{align*}
where the matrices $V_{1}^{(G)}, V_{2}^{(G_{i})},\ldots{}$ are obtained by a recursive process\cite{Hirschfelder1974JCP601118DegenerateRSPerturbationTheory,Silverstone1971JCP542325ExplicitSolutionWavefunctionEnergy}.
Explicitly, for the first three orders they are
\begin{align*}
	\left(V_{1}^{(G)}\right)_{k,j} &= \leftRightBraket{0,0}{\phi^{(g[k])}|H_{I}|\phi^{(g[j])}}{0,0} \\
	\left(V_{2}^{(G_{i})}\right)_{k,j} &= \leftRightBraket{0,1}{\phi^{(g_{i}[k])}|H_{I}R^{(g)}H_{I}|\phi^{(g_{i}[j])}}{0,1} \\
	\left(V_{3}^{(G_{i,j})}\right)_{k,l} &= \leftRightBraket{0,2}{\phi^{(g_{i,j}[k])}|U^{(g_{i})}|\phi^{(g_{i,j}[l])}}{0,2}
\end{align*}
where
\begin{equation*}
	U^{(g_{i})} = H_{I}R^{(g)}\overline{H_{I}}R^{(g)}H_{I} + H_{I} R^{(g)} H_{I} R^{(g_{i})}H_{I}R^{(g)} H_{I}
\end{equation*}
with $\overline{H_{I}} = H_{I} - \en{g_{i}}{1}$ and
\begin{equation*}
R^{(g)} = \sum_{k \notin g} \frac{\psiState{k}{} \bra{\psi^{(k)}}}{\en{g}{0} - \en{k}{0}},\;\; R^{(g_{i})} = \sum\limits_{\substack{k \in g_{j}\\ j \ne i}} \frac{\pState{k}{0,1} \leftBraket{0,1}{\phi^{(k)}}}{\en{g_{i}}{1} - \en{k}{1}} .
\end{equation*}

The above recursive process does the following: At the start, we have $\pState{g}{0,0} = \psiState{g}{}$. These are then superposed according to \cref{eq:firstSuperposition}, obtaining $\pState{g}{0,1}$. Within each degenerate subspace $g_{i}$, these are again superposed according to \cref{eq:secondSuperposition}, obtaining $\pState{g_{i}}{0,2}$. Again, within each degenerate subspace $g_{i,j}$, these are superposed according to \cref{eq:thirdSuperposition}, obtaining $\pState{g_{i,j}}{0,3}$, and so on.
Provided that the degeneracy of a given state $\pState{k}{},\; 1 \le k \le N$ is solved at $n$-th order, the degenerate subspace for this state at orders $l>n$ contains only one state, so that naturally $\pState{k}{0,l} = \pState{k}{0,n}$ and\cite{Hirschfelder1974JCP601118DegenerateRSPerturbationTheory} $\pState{k}{0} = \pState{k}{0,n}$.

In the following, we will prove that the $\pState{i}{0},\; 1 \le i \le N$ simultaneously localize on one or more maximally extended blocks of potentials of the same kind (MEBPS) [Statement 1] and determine on which such blocks a given state can simultaneously localize (Statement 2).
Each MEBPS is the cavity of a resonator, thus giving reason for the localization of states on resonators. Statement 1 also shows that the $\pState{i}{0}$ are locally parity symmetric individually on each MEBPS, and Statements 3 and 4 further deal with longer-range symmetries of the zeroth-order states.
Out of the many possible choices of $\psiState{i}{}$ (due to its high degeneracy), in the following we chose them such that $\psiState{\alpha[k]}{}$ [$\psiState{\beta[k]}{}$] is solely localized on the $k$-th site with potential $A$ [$B$] (counted from the left).

\begin{statement} \label{stat:MEBPS}
	Each state $\pState{i}{0},\; 1 \le i \le N$ simultaneously localizes on one or more maximally extended blocks $A\ldots{}A$ or $B\ldots{}B$ of potentials of the same kind (MEBPS) and is locally parity symmetric on each of these blocks.
\end{statement}
\begin{proof}
	The proof is done in three steps. Firstly, we show that $V_{1}^{(G)},\; G \in \{A,B\}$ is block-diagonal, where each block is related to exactly one MEBPS. Secondly, we show that the eigenvectors $\ket{b^{(G)(k)}},\; 1\le k \le |g|$ of $V_{1}^{(G)}$ are locally parity symmetric and subsequently the $\pState{g}{0,1}$ are locally symmetric on each MEBPS. Thirdly, we show that any higher-order states $\pState{g}{0,n}$ with $n>1$ show this local symmetry as well, and thus the zeroth-order states $\pState{g}{0}$ are locally symmetric as well.
	
	We start by proving the following.
	For the case that $H_{I}$ contains only next-neighbor couplings (as is the case here) the $V_{1}^{(G)}$ become block-diagonal, i.e., can be written as
	\begin{equation} \label{eq:blockDiagonalMatrix}
	V_{1}^{(G)} = 
	\begin{pmatrix}
	D_{1}^{(G)}\\
	& \ddots \\ 
	& & D_{n_{G}}^{(G)}
	\end{pmatrix}
	\end{equation}
	where $n_{G}$ denote the number of blocks occurring in $V_{1}^{(G)}$ and each block
	\begin{equation} \label{eq:tridiagonalCouplingMatrix}
	D_{i}^{(G)} = 
	\begin{pmatrix}
	0 & h & & &\\
	h & \ddots & \ddots & & \\
	& \ddots & \ddots & h \\
	& & h & 0
	\end{pmatrix},\; 1 \le i \le n_{G}
	\end{equation}
	is a tridiagonal Toeplitz matrix.
	To prove that $V_{1}^{(G)}$ is of the above form, we note that by the definition of $V_{1}^{(G)}$ and $H_{I}$, two states $\psiState{j}{}, \psiState{k}{},\; 1 \le j,k \le N$ are coupled to each other by any of the two matrices $V_{1}^{(G)}$ provided that (i) the single sites on which they localize are direct neighbors and (ii) they have the same zeroth-order energy, i.e., they must be localized on states with identical on-site potential.
	If (i) and (ii) are fulfilled for $\psiState{j}{}, \psiState{k}{},\; j \ne k$ with $j,k \in g$ and $g[l] = j_{1},g[m] = j_{2}$, then the corresponding matrix element $(V^{(G)}_{1})_{l,m} = h$ due to the definition of these states.
	As a result, for each MEBPS $A\ldots{}A$ [$B\ldots{}B$] containing $n$ sites, there is one tridiagonal $n\times{}n$ block of the form \cref{eq:tridiagonalCouplingMatrix} present in $V^{(A)}$ [$V^{(B)}$].
	
	We now show that the $\pState{g}{0,1}$ are locally parity symmetric on each MEBPS.
	To this end, we use the fact that the eigenvectors of the block-diagonal matrix $V^{(G)}$ are
	\begin{equation*}
	\ket{b^{(G)(k)}} = 
	\left\{
		\begin{pmatrix}
		\{ \mathbf{w}_{1,G} \} \\
		\mathbf{0}_{d_{2,G}}\\
		\mathbf{0}_{d_{3,G}}\\
		\vdots\\
		\mathbf{0}_{d_{n_{G}},G}
		\end{pmatrix}
	,
		\begin{pmatrix}
		\mathbf{0}_{d_{1,G}}\\
		\{ \mathbf{w}_{2,G} \} \\
		\mathbf{0}_{d_{3,G}}\\
		\vdots\\
		\mathbf{0}_{d_{n_{G}},G}
		\end{pmatrix}
	,\ldots ,
		\begin{pmatrix}
		\mathbf{0}_{d_{1,G}}\\
		\mathbf{0}_{d_{2,G}}\\
		\mathbf{0}_{d_{3,G}}\\
		\vdots\\
		\{ \mathbf{w}_{n_{G},G} \}
		\end{pmatrix}
	\right\}
	\end{equation*}
	with $1 \le k \le |g|$ and where $\mathbf{0}_{d_{i,G}}$ is the $d_{i,G}\times 1$ vector with identical zero entries and $\{\mathbf{w}_{i,G}\}$ denotes the set of $d_{i,G}$ eigenvectors of $D_{i}^{(G)} \in \mathbb{R}^{d_{i,G} \times{} d_{i,G}}$.
	All vectors in a given set $\{\mathbf{w}_{i,G}\}$ have non-vanishing components only on one maximally extended block of potentials of the same kind and are parity-symmetric w.r.t. a reflection through the center of this block.
	The latter is due to the fact that the $D_{i}^{(G)}$ are real and bisymmetric, and the eigenstates of such matrices have definite parity\cite{CollarCentrosymmetricCentroskewMatrices1962} (in the case of degeneracies, the eigenvectors can be chosen accordingly). A matrix is bisymmetric if it is symmetric both around the main and the anti-diagonal.
	Since we have ordered the state $\psiState{g[k]}{},\; 1 \le k \le |g|$ such that it has non-vanishing amplitude on the $k$-th site with potential $G$, one can easily show that each of the $\pState{g}{0,1}$ has definite parity on each MEBPS.
	
	For second order degenerate perturbation theory, the states $\pState{g_{i}}{0,1}$ which are degenerate up to first order are superposed to obtain $\pState{g_{i}}{0,2}$.
	Now, since all states in a given set $\{\mathbf{w}_{j,g} \},\; 1 \le j \le n_{G}$ have distinct eigenvalues, the states $\pState{g_{i}}{0,1}$ are constructed such that for each set $g_{i}$ there is at most one state possessing non-vanishing amplitudes on any given MEBPS.
	Thus, $\pState{g_{i}}{0,2},\ldots{}$ will keep the local parity symmetry, and it is trivial to show that the zeroth-order states $\pState{g}{0}$ are locally parity symmetric on each MEBPS as well.
\end{proof}
Due to its maximal extension, each MEBPS is directly neighbored either by potentials of the other kind on one or on both sides, with the former being the case if the MEBPS forms one edge of the chain. Thus, the $\pState{g}{0}$ are seen to localize on \emph{resonators}.
We now show that a given state $\pState{i}{0},\; 1 \le i \le N$ can only simultaneously localize on resonators fulfilling certain conditions.
\begin{statement}
	A given zeroth-order state $\pState{i}{0},\; 1 \le i \le N$ can simultaneously localize on a set of MEBPS with individual lengths $l_{1},l_{2},\ldots{},l_{n}$ only if the following conditions are met. (i) All the MEBPS must have potentials of the same kind. (ii) There exist integers $1 \le k_{j} \le l_{j},\; 1 \le j \le n$ such that
	\begin{equation} \label{eq:integerConditionResonators}
		\frac{k_{1}}{l_{1} + 1} = \frac{k_{2}}{l_{2} + 1} = \ldots{} = \frac{k_{n}}{l_{n} + 1}.
	\end{equation}
\end{statement}
\begin{proof}
	By definition, the zeroth-order state $\pState{i}{0}$ is formed by superpositions of a subset of the states $\pState{g_{j}}{0,1}$, with $i \in g_{j}$.
	Thus, a necessary condition to allow for the localization on multiple MEBPS $\{M_i\}$ is that among $\ket{\phi^{(g_j)}}_{0,1}$, for each $M_i$ there is one state localized on it.
	By definition, the set $\pState{g_{j}}{0,1}$ contains states with pairwise identical zeroth-order $\en{g_{j}}{0}$ \emph{and} pairwise identical first-order energy corrections $\en{g_{j}}{1}$. The zeroth-order energies are identical if all the MEBPS have the same potential. To see when there is an equality of the of first-order energies, we use the fact that the first-order energy corrections $\en{g}{1}$ can be given analytically.
	The block matrix $D_{i}^{(G)} \in \mathbb{R}^{l_{i} \times{} l_{i}}$ occurring in $V_{1}^{(G)}$ is of tridiagonal Toeplitz form, and its eigenvalues are thus\cite{NoscheseTridiagonalToeplitzmatrices2013} given by
	\begin{equation} \label{eq:toeplitzMatrix}
	\lambda_{k}^{D_{i}^{(G)}} = 2|h| \cos\left(\frac{\pi k}{l_{i} +1}\right),\; k=1,\ldots,l_{i} .
	\end{equation}
	Thus, two blocks $D_{1}^{(G)}, D_{2}^{(G)}$ with size $l_{1}, l_{2}$ only share common eigenvalues
	provided that the integer-equation
	\begin{equation} \label{eq:intersectingEigenvalues}
	\frac{k_{1}}{l_{1} + 1} = \frac{k_{2}}{l_{2} + 1}
	\end{equation}
	is fulfilled for some $1\le k_{1} \le l_{1}$ and $1 \le k_{2} \le l_{2}$. Generalizing the above to the case of $n$ blocks with corresponding length $l_{1},\ldots{},l_{n}$ directly yields \cref{eq:integerConditionResonators}.
\end{proof}
For many combinations of $l_{1} \ne l_{2}$ (especially for small $l_{1,2}$), \cref{eq:intersectingEigenvalues} can not be fulfilled, with the prominent exception of $l_{1,2}$ both being odd numbers.
In this case, there exist states $\pState{i}{0,1}$ which localize on two resonators of \emph{different} kind, and usually this behavior is kept also for $\pState{i}{0}$ as well as the corresponding complete states $\ket{\phi^{(i)}}$.
This is the explanation for the emergence of the two states in \cref{fig:Rudin55StateMap} (a) which are marked by green ellipse.

We now show how the local symmetries of the zeroth-order states can be explained by means of that of the underlying potential sequence.
Due do the complexity of binary tight-binding chains, we only show two explicit cases, but stress that the process can easily be applied to any given chain.
\begin{statement}
	If $H_{0}$ contains one or more of the substructures
	\begin{equation} \label{eq:firstSymmetryCase}
		[\ldots{}]AA\underbracket{BAB}_{S_{1}}AA[\ldots{}]
	\end{equation}
	or
	\begin{equation} \label{eq:firstSymmetryCase2}
	[\ldots{}]AA\underbracket{BAB}_{S_{1}}A
	\end{equation}
	(where $[\ldots{}]$ denotes a possibly larger extension of the chain) then all zeroth-order states $\pState{\beta}{0}$ respect the local symmetry $S_{1}$ on each of these structures.
\end{statement}
\begin{proof}
	We label the sites of the substructure $AABABAA$ from left ($s_{1}$) to right [$s_{7}$ for \cref{eq:firstSymmetryCase} and $s_{6}$ for \cref{eq:firstSymmetryCase2}], where the small $s$ indicates a possible embedding of the corresponding substructure into a greater system.
	Among the $N$ states $\pState{i}{0,1},\; 1 \le i \le N$ of this system, all but the two states $\pState{j_{k}}{0,1},\; 1 \le k \le 2$ with $1 \le j_{k} \le N,\; j_{1} \ne j_{2}$ have vanishing amplitudes on both of the sites $s_{3}$ and $s_{5}$. Moreover, $\pState{j_{1}}{0,1}$ has non-vanishing amplitude only on site $s_{3}$, while $\pState{j_{2}}{0,1}$ has non-vanishing amplitude only on site $s_{5}$.
	We denote the set $\pState{g_{1}}{}$ to contain all states which are degenerate with $\pState{j_{k}}{}$ up to first order.
	As can be shown, $V_{2}^{(G_{1})}$ (just as $V_{1}^{(G)}$) is block-diagonal, and only states that are degenerate up to first order and which are localized on MEBPS which are separated by exactly one site are coupled to each other. Thus, the two states $\pState{j_{1}}{0,1}, \pState{j_{2}}{0,1},\; j_{1},j_{2} \in g_{1}$ are \emph{not} coupled to the other $\pState{g_{1}}{0,1}$ by means of $V_{2}^{(g_{1})}$, but only to each other. If $g_{1}[1] = j_{1}$ and $g_{1}[2] = j_{2}$, then the submatrix
	\begin{equation} \label{eq:V2Matrix}
		\left(V_{2}^{(G_{1})} \right)_{l,m} =
		\frac{h^{2}}{v_{B} - v_{A}}
		\begin{pmatrix}
		2 & 1\\
		1 & 2
		\end{pmatrix},\; 1 \le l,m \le 2
	\end{equation}
	which is real-valued and bisymetric. Its eigenvectors are thus parity-symmetric. As can be easily shown, thus $\pState{j_{k}}{0,2}$ are parity symmetric within $S_{1}$, i.e., respect this domain of local symmetry. The matrix in \cref{eq:V2Matrix} has non-degenerate eigenvalues referring to $\en{j_{k}}{2}$, and thus the two states $\pState{j_{k}}{}$ are no longer degenerate to each other at second order. Since $\pState{j_{k}}{0,1}$ are the only ones out of the $\pState{\beta}{0,1}$ with non-vanishing amplitudes within $S_{1}$, one can easily show that \emph{all} zeroth-order states $\pState{\beta}{0}$ must respect $S_{1}$.
\end{proof}
The above is of relevance for the first and third quasiband from top in \cref{fig:fibo55StateMap} (a).
By means of another example, we indicate the importance of the environment of a domain $S$ such that the zeroth-order states respect it.
\begin{statement}
	If the right edge of $H_{0}$ is given by
	\begin{equation} \label{eq:firstAsymmetryCase}
	[\ldots{}]BAAB\underbracket{ABA}_{S_{1}}
	\end{equation}
	(where $[\ldots{}]$ denotes a possibly larger extension of the chain) then the zeroth-order states $\pState{\alpha}{0}$ do not respect the local symmetry $S_{1}$.
	However, if the right edge of $H_{0}$ is given by
	\begin{equation} \label{eq:firstAsymmetryCaseb}
	[\ldots{}]BAAB\underbracket{ABA}_{S_{1}}B
	\end{equation}
	then all zeroth-order states $\pState{\alpha}{0}$ respect the local symmetry $S_{1}$.
\end{statement}
\begin{proof}
We label the sites of the substructure $BAABABA$ from left ($s_{1}$) to right [$s_{7}$ for the first and $s_{8}$ for the second statement].
Among the $N$ states $\pState{i}{0,1},\; 1 \le i \le N$ of this system, all but the two states $\pState{j_{k}}{0,1},\; 1 \le k \le 2$ with $1 \le j_{k} \le N,\; j_{1} \ne j_{2}$ have vanishing amplitudes on both of the sites $s_{5}$ and $s_{7}$. Moreover, $\pState{j_{1}}{0,1}$ has non-vanishing amplitude only on site $s_{5}$, while $\pState{j_{2}}{0,1}$ has non-vanishing amplitude only on site $s_{7}$.
We denote the set $\pState{g_{1}}{}$ to contain all states which are degenerate with $\pState{j_{k}}{}$ up to first order.
Again, due to the block-diagonal character of $V_{2}^{(g_{1})}$, the two states $\pState{j_{1}}{0,1}, \pState{j_{2}}{0,1},\; j_{1},j_{2} \in g_{1}$ are \emph{not} coupled to the other $\pState{g_{1}}{0,1}$ by means of $V_{2}^{(G_{1})}$, but only to each other. If $g_{1}[1] = j_{1}$ and $g_{1}[2] = j_{2}$, then the submatrices
\begin{equation} \label{eq:V2MatrixAsymmetric}
\left(V_{2}^{(G_{1})} \right)_{l,m} =
\frac{h^{2}}{v_{B} - v_{A}}
\begin{pmatrix}
2 & 1\\
1 & 1
\end{pmatrix},\; 1 \le l,m \le 2
\end{equation}
for $[\ldots{}]BAABABA$ and
\begin{equation} \label{eq:V2MatrixAsymmetricb}
\left(V_{2}^{(G_{1})} \right)_{l,m} =
\frac{h^{2}}{v_{B} - v_{A}}
\begin{pmatrix}
2 & 1\\
1 & 2
\end{pmatrix},\; 1 \le l,m \le 2
\end{equation}
for $[\ldots{}]BAABABAB$.
Both \cref{eq:V2MatrixAsymmetric,eq:V2MatrixAsymmetricb} are real-valued, but the former is not bisymmetric, while the latter is.
As can be easily shown, for the first case, the $\pState{j_{k}}{0,2}$ are also not parity symmetric within $S_{1}$, i.e., do not respect this domain of local symmetry. The matrix in \cref{eq:V2MatrixAsymmetric} has non-degenerate eigenvalues referring to $\en{j_{k}}{2}$, and thus the two states $\pState{j_{k}}{}$ are no longer degenerate to each other at second order. Since $\pState{j_{k}}{0,1}$ are the only ones out of the $\pState{\alpha}{0,1}$ with non-vanishing amplitudes within $S_{1}$, one can easily show that no zeroth-order state $\pState{\alpha}{0}$ respects $S_{1}$.
For the second case, the line of argumentation essentially is the same with the difference that, due to the bisymmetry of \cref{eq:V2MatrixAsymmetricb}, the $\pState{j_{k}}{0,2}$ are parity symmetric within $S_{1}$, and thus all $\pState{\alpha}{0}$ respect $S_{1}$.
\end{proof}
The reason for the non-bisymmetry of \cref{eq:V2MatrixAsymmetric} is the different \emph{environment} of $s_{5}$ and $s_{7}$. In this particular case, the environment is made up by the next-neighboring sites, but for higher orders it comprises many more sites left and right to the given domain. The fact that $\pState{55}{0}$ in \cref{fig:fibo55Comparison2} (c) is not locally symmetric w.r.t. a reflection through $\alpha$ is due to the fact that the environment of $S_{1,2}$ is not symmetric w.r.t. a reflection through $\alpha$ in a sufficiently large radius, while in \cref{fig:fibo55Comparison2} (d) it is, so that $\pState{55}{0}$ is locally symmetric w.r.t. a reflection through $\alpha$.

\subsubsection*{First-order state corrections and eigenstate fragmentation}
In the above, we have seen how the correct zeroth-order states are related to the local symmetries of the underlying potential. In particular, we have argued that each of the $\pState{i}{0}$ is fragmented, since it has non-vanishing amplitudes only on one kind of site.
We have further seen that, already at contrast $c= 6$, $\pState{i}{0} + \pState{i}{1} \approx \pState{i}{}$.
In the following we show that, in general, the $\pState{i}{0} + \pState{i}{1}$ are fragmented as well.

Contrary to the non-degenerate case, where the first order state corrections are given by
\begin{equation*}
	\pState{i}{1} = \sum_{j \ne i}\frac{\pState{j}{0}\leftBraket{0}{\phi^{(j)}}}{\en{i}{0} - \en{j}{0}} H_{I} \pState{i}{0},
\end{equation*}
the corresponding expression in degenerate perturbation theory depends on the order in which the degeneracy of $\pState{i}{}$ is completely resolved. A full, recursive expression for $\pState{i}{1}$ can be found in Ref. \onlinecite{Silverstone1981PRA231645PracticalRecursiveSolutionDegenerate}. In this context, we only need the easily provable fact that
\begin{equation}
	\leftRightBraket{}{\psi^{(\bar{g}[j])}|\phi^{(g[k])}}{1} = \frac{\braket{\psi^{(\bar{g}[j])}|H_{I}| \phi^{(g[k])}}_{0}}{\en{g[k]}{0} - \en{\bar{g}[j]}{0}}
\end{equation}
where $\bar{g}$ denotes the set of sites which are not elements of $g$. In other words, if $\pState{i}{0}$ ``lives'' on, say, sites with potential $A$, then $\pState{i}{1}$ will have non-vanishing amplitudes only on directly \emph{neighboring} $B$ sites, but not on those further away.
As a result, if $\pState{i}{0}$ has non-vanishing amplitudes on a small number of sites (which we have observed for Fibonacci, Thue-Morse and Rudin-Shapiro chains), then $\pState{i}{0} + \pState{i}{1}$ is fragmented.

\section{Discrete Energy-localization theorem and approximation of eigenvalues by sub-Hamiltonians}
\label{appendix:submatrixTheorem}
We here extend a theorem of Ref. \onlinecite{Filoche2012PNASU10914761UniversalMechanismAndersonWeak}, connecting the localization of a state to its eigenenergy, to discrete Hamiltonians:
\begin{theorem}
	The following equation holds
	\begin{equation} \label{eq:submatrixJustification}
	\frac{\Vert \ket{\phi} \Vert_{\partial \overline{D}}}{\Vert \ket{\phi} \Vert_{D}} \ge \min_{\epsilon_{k}} \frac{| \epsilon - \epsilon_{k}|}{|h|}
	\end{equation}	
	where $\ket{\phi}$ is an eigenvector of $H$ with energy $\epsilon$ and $\epsilon_{k}$ are eigenvalues of $H$ restricted to the domain $D$ which is a simply connected subdomain of the whole system. $\Vert \ket{\phi} \Vert_{D}$ is the norm of $\ket{\phi}$ on $D$ and $\Vert \ket{\phi} \Vert_{\partial \overline{D}}$ the norm of $\ket{\phi}$ on next-neighbors of $D$.
\end{theorem}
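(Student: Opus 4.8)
The plan is to reduce the statement to a resolvent estimate for the principal submatrix $H_D$ of $H$ supported on the sites of $D$, whose eigenvalues are the $\epsilon_k$. First I would restrict the eigenvalue equation $H\ket{\phi}=\epsilon\ket{\phi}$ to the rows indexed by sites in $D$. Writing $\ket{\phi_D}=P_D\ket{\phi}$ for the restriction of $\ket{\phi}$ to $D$ (with $P_D$ the projector onto $D$) and splitting each neighbor sum into the part inside $D$ and the part outside, the inside part assembles into $H_D\ket{\phi_D}$, while the outside part, by the next-neighbor nature of $H$, only involves the hoppings $h$ linking the boundary sites of $D$ to their exterior next-neighbors in $\partial\overline{D}$. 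This yields the inhomogeneous relation $(H_D-\epsilon)\ket{\phi_D}=-\ket{s}$, where the source $\ket{s}$ is supported on the boundary of $D$ with entries $s_m=\sum_{n\in\partial\overline{D}}h\,\phi_n$ (the sum running over exterior neighbors $n$ of the boundary site $m$).

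Second, assuming $\epsilon\notin\sigma(H_D)$ I would invert, $\ket{\phi_D}=-(H_D-\epsilon)^{-1}\ket{s}$, and take norms. Since $H_D$ is a principal submatrix of a real symmetric matrix it is itself real symmetric, so $(H_D-\epsilon)^{-1}$ is normal and its spectral norm equals $1/\mathrm{dist}(\epsilon,\sigma(H_D))=1/\min_{\epsilon_k}|\epsilon-\epsilon_k|$. Hence $\Vert\ket{\phi}\Vert_D=\Vert\ket{\phi_D}\Vert\le \Vert\ket{s}\Vert/\min_{\epsilon_k}|\epsilon-\epsilon_k|$. The excluded degenerate case $\epsilon\in\sigma(H_D)$ is dispatched trivially: then the right-hand side of the claimed inequality vanishes while the left-hand side is nonnegative, so it holds automatically.

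Third, I would bound the source norm. Because all hoppings equal $h$ and, for a simply connected (contiguous) $D$ in the one-dimensional chain, each boundary site of $D$ couples to exactly one distinct exterior neighbor in $\partial\overline{D}$, the map sending the restriction of $\ket{\phi}$ to $\partial\overline{D}$ into $\ket{s}$ is, up to the scalar $h$, an isometry onto its image; thus $\Vert\ket{s}\Vert=|h|\,\Vert\ket{\phi}\Vert_{\partial\overline{D}}$. Substituting into the bound of the previous step and rearranging gives $\Vert\ket{\phi}\Vert_{\partial\overline{D}}/\Vert\ket{\phi}\Vert_D\ge \min_{\epsilon_k}|\epsilon-\epsilon_k|/|h|$, which is exactly \cref{eq:submatrixJustification}.

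The main obstacle I expect is this last step: controlling $\Vert\ket{s}\Vert$ cleanly. In the one-dimensional, uniform-hopping setting the one-to-one correspondence between the boundary of $D$ and $\partial\overline{D}$ makes $\Vert\ket{s}\Vert=|h|\,\Vert\ket{\phi}\Vert_{\partial\overline{D}}$ an exact equality; for a general graph one would instead have to bound the operator norm of the boundary coupling block $P_D H P_{\partial\overline{D}}$ by $|h|$, which requires an assumption on the boundary connectivity and turns the equality into an inequality. I would therefore phrase the argument for the chain at hand, where the estimate holds sharply, and emphasize that the spectral distance $\min_{\epsilon_k}|\epsilon-\epsilon_k|$ enters solely through the resolvent bound of the second step, which is the quantitative heart of the theorem.
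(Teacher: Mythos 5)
Your proposal is correct and follows essentially the same route as the paper: both restrict $H\ket{\phi}=\epsilon\ket{\phi}$ to the rows in $D$ to obtain $(H_D-\epsilon)\ket{\phi_D}=-h\ket{\partial\phi}$ with a boundary source, and then convert the distance from $\epsilon$ to $\sigma(H_D)$ into the bound --- the paper by expanding in the eigenbasis of $H_D$ and using Parseval together with $\sum_k(\epsilon-\epsilon_k)^2|\braket{\phi_k|i}|^2\ge\min_k(\epsilon-\epsilon_k)^2\Vert\ket{i}\Vert^2$, you by the equivalent resolvent-norm estimate $\Vert(H_D-\epsilon)^{-1}\Vert=1/\min_k|\epsilon-\epsilon_k|$. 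The only cosmetic difference is that the paper's eigenbasis summation covers $\epsilon\in\sigma(H_D)$ without the separate (trivial) case your inversion step requires.
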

\begin{proof}
	If $D$ contains $N_{D}$ sites, define the $N_{D} \times N_{D}$ matrix $H_{D}$ constructed from the corresponding matrix elements of the complete Hamiltonian $H$.
	In other words, $H_{D}$ is the restriction of $H$ onto $D$.
 	Similarly, we further define $\ket{i}$ as the $N_{D} \times 1$ vector constructed from the full eigenvector $\ket{\phi}$ by taking the interior elements of $D$. If we now let $H_{D}$ act on $\ket{i}$, one can easily show that
	\begin{equation} \label{eq:localHamiltonian}
		H_{D} \ket{i} = \epsilon \ket{i} - h\ket{\partial \phi}.
	\end{equation}
	where $\epsilon$ is the eigenvalue of the complete state $\ket{\phi}$.
	Here, $h$ denotes the next-neighboring hopping of $H$ (as defined in \cref{eq:tridiagonalHamiltonian}) and $\ket{\partial \phi}$ denotes a $N_{D} \times 1$ vector with zeros everywhere but on the first and last entry. These two non-vanishing entries are constructed by taking the corresponding two elements of $\ket{\phi}$ within $\partial \bar{D}$. If $N_{D} = 1$, then we define the only entry of $\ket{\partial \phi}$ as the sum of the two amplitudes of $\ket{\phi}$ within $\partial \bar{D}$.
	
	To make the notation introduced above more explicit, let us assume that
	\begin{equation}
		H = \begin{pmatrix}
		v_{1} & h & 0 & 0 \\
		h & v_{2} & h & 0 \\
		0 & h & v_{3} & h \\
		0 & 0 & h & v_{4}
		\end{pmatrix},\;
		\ket{\phi} = \begin{pmatrix}
		a \\ b \\ c \\ d
		\end{pmatrix}.
	\end{equation}
	If $D$ would denote the central two sites, then $\ket{i} = (b,c)^{T}$ and $\ket{\partial \phi} = (a,d)^{T}$.
	
	\cref{eq:localHamiltonian} can be interpreted as follows: Provided that $\ket{\phi}$ is identically zero on the next-neighboring sites of $D$, $\ket{i}$ would be an eigenstate to $H_{D}$. However, $\ket{\phi}$ usually has \emph{non-vanishing} amplitudes on sites neighboring to $D$, and thus $\ket{\partial \phi} \ne 0$. Thus, this correction must be included in \cref{eq:localHamiltonian}.	
	
	We now proceed with our proof of \cref{eq:submatrixJustification}. Multiplying from the left with $\bra{\phi_{k}}$, i.e., the $k$-th eigenstate of $H_{D}$, we get
	\begin{equation}
	h\cdot \braket{\phi_{k}|\partial \phi}  = (\epsilon - \epsilon_{k}) \cdot  \braket{\phi_{k}|i}.
	\end{equation}
	Multiplying this expression by its complex conjugate, summing over $k$ and taking the square root of the result, we get
	\begin{equation} \label{eq:norm}
	|h| \left(\sum_{k} | \braket{\phi_{k}|\partial \phi}|^2\right)^{1/2} =  \left(\sum_{k} (\epsilon - \epsilon_{k})^2 \cdot |\braket{\phi_{k}|i} |^2 \right)^{1/2} .
	\end{equation}
	Since the $\ket{\phi_{k}}$ are a complete orthonormal basis set, the left-hand side can be simplified by using the definition of the norm, getting
	\begin{equation}
		|h| \left(\sum_{k} | \braket{\phi_{k}|\partial \phi}|^2\right)^{1/2} = |h| \Vert \ket{\phi} \Vert_{\partial \overline{D}}.
	\end{equation}
	The sum on the right-hand side can be estimated as
	\begin{equation} \label{eq:estimation}
	\sum_{k} (\epsilon - \epsilon_{k})^2 \cdot \Vert\braket{\phi_{k}|i} \Vert^2 \ge \min_{\epsilon_{k}} (\epsilon - \epsilon_{k})^{2} \cdot \sum_{k'} \Vert\braket{\phi_{k'}|i} \Vert^2.
	\end{equation}
	Again, due to the definition of the norm, we can thus write \cref{eq:norm} as
	\begin{equation} \label{eq:norm2}
	|h| \Vert \ket{\phi} \Vert_{\partial \overline{D}} \ge  \min_{\epsilon_{k}} |\epsilon - \epsilon_{k}| \Vert \ket{\phi} \Vert_{D}.
	\end{equation}
	which directly yields \cref{eq:submatrixJustification}.
\end{proof}
Roughly speaking, the theorem states the following. Assume that an eigenstate $\ket{\phi}$ has a high integrated density on some domain $D$, with low amplitudes on the next-neighboring sites left and right of the domain. Then, the energy $\epsilon$ of this eigenstate is approximately equal to the energy of one of the eigenstates $\ket{\phi_{k}}$ of the local Hamiltonian $H_{D}$.
If $D$ is a resonator and $\ket{\phi}$ represents an LRM of $H_{D}$ within $D$ and suitably small amplitudes on next-neighboring sites of $D$, then $\epsilon \approx \epsilon_{i}$, where $\epsilon_{i}$ is the energy of the LRM.

\section{Comments on the application to longer chains} \label{appendix:longerChainsComments}
%%%%%%%%%%%%%%%%%%%%%%%%%%%%%%%%%%%%%%%%%%%%%%%%%%%%%%%%%%%%%%%%%%%%%%%
\begin{figure}[t]
	\centering
	\includegraphics[max size={.99\columnwidth}{0.6\textheight}]{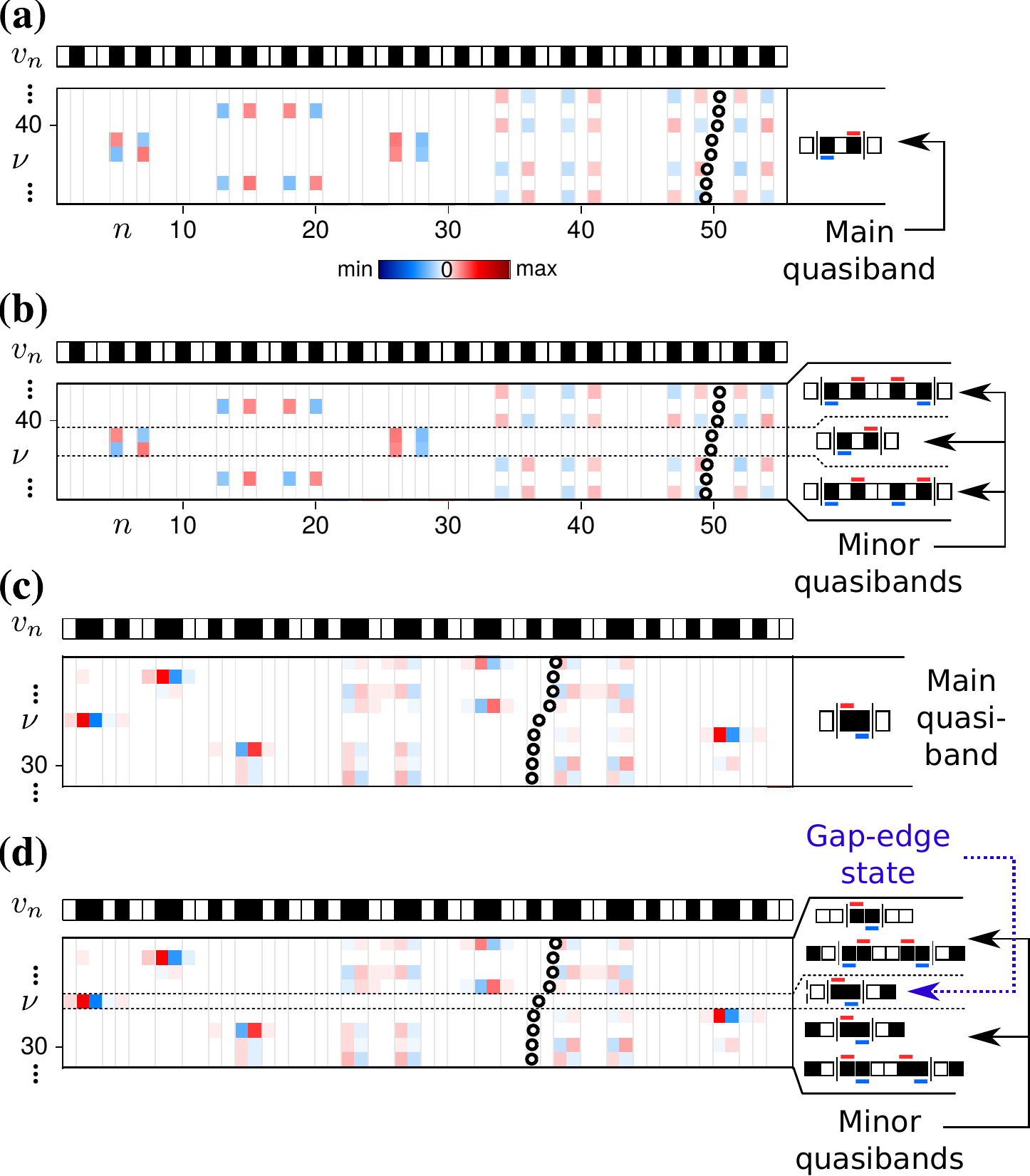}
	\caption{\textbf{(a)} Shown is the third quasiband from top for the $9$-th generation Fibonacci chain at contrast $c = 3$. \textbf{b} The three minor quasibands and their respective LRMs.
	\textbf{(c)} The third quasiband from top for a $L = 55$ sites truncated Thue-Morse chain at contrast $c = 3$. \textbf{d} The two minor quasibands as well as the gap-edge state and their respective LRMs.}
	\label{fig:fibo55StateMapSmall}
\end{figure}
%%%%%%%%%%%%%%%%%%%%%%%%%%%%%%%%%%%%%%%%%%%%%%%%%%%%%%%%%%%%%%%%%%%%%%%
We now comment on how the treatment of longer chains or the investigation of the subband structure can be pursued.
To this end, the core element of our approach, the analysis of states in terms of their constituting LRMs needs to be slightly changed by extending the class of resonators taken into account.
The process of finding the constituting LRMs of a given state $\ket{\phi}$ with energy $\epsilon$ is then as follows.
Starting from a domain $D$ exclusively containing sites with very high amplitudes, one forms a simply connected domain $D'$ by the union of $D$ and its surrounding sites (not limited to next-neighbors) such that $\ket{\phi}$ has very low amplitude on next-neighbors of $D'$.
Then, [guaranteed by \cref{eq:submatrixJustification}], one eigenstate of the Hamiltonian $H_{D'}$ has nearly the same energy $\epsilon \approx \epsilon^{i}$ and is (up to normalization), within $D'$, nearly equal to $\ket{\phi}$ and thus forms an LRM.
As the maximum deviation between $\epsilon^{i}$ and $\epsilon$ is bounded by means of \cref{eq:submatrixJustification} and generally becomes smaller for larger $D'$, its size should thus be chosen large enough to achieve the accuracy needed for an explanation of the sub-quasibands and gap-edge states present, but as small as possible in order not to lose the local character of the treatment.
If the LRM obtained by the above process does not explain all fragments of $\ket{\phi}$, then one needs to repeat it for each of the remaining fragments until all constituting LRMs of $\ket{\phi}$ are found.

We now exemplify in \cref{fig:fibo55StateMapSmall} some possible results of such a deeper analysis.
Subfigure (a) shows the third quasiband from top of the $9$-th generation Fibonacci chain [the one shown in \cref{fig:fibo55StateMap} (a)], but now at a lower contrast of $c = 3$.
At this contrast, the energetical substructure of the band becomes apparent, denoted by the two dashed lines in \cref{fig:fibo55StateMapSmall} (b). There are three minor quasibands, comprising the three uppermost, the two central and the three lowermost eigenstates within this quasiband.
The above process then yields the LRMs shown on the right-hand side of this subfigure.
Another example is demonstrated in \cref{fig:fibo55StateMapSmall} (c) and (d), showing the third quasiband from top for a truncated $L = 55$ site Thue-Morse chain [as shown in \cref{fig:thue55StateMap} (a)] at contrast $c = 3$. Here, the main quasiband features the resonator mode $A|\overline{B}\underline{B}|A$, but again features a substructure as shown in subfigure (d). Each minor quasibands is made up of two nearly degenerate LRMs, with the underlying resonators having resonator walls each consisting of \emph{two} sites.
The state in-between these minor bands consists of the edge-LRM \textbrokenbar{}$A|\overline{B}\underline{B}|AB$, where the \textbrokenbar{} indicates the edge of the chain.

%Fix the problem that for longbibliography the author's names are no longer abbreviated.
%\nocite{apsrev41Control}
%\bibliographystyle{apsrev4-1}
%auxBib.bib is necessary since it contains a control sequence necessary for the fix!
%\bibliography{auxBib,../../Literature/LiteratureDB}

%merlin.mbs apsrev4-1.bst 2010-07-25 4.21a (PWD, AO, DPC) hacked
%Control: key (0)
%Control: author (8) initials jnrlst
%Control: editor formatted (1) identically to author
%Control: production of article title (1) required
%Control: page (0) single
%Control: year (0) verbatim
%Control: production of eprint (0) enabled
%

\end{document}